\documentclass[aps,prd,amsfonts,floats,floatfix,eqsecnum,notitlepage,nofootinbib,superscriptaddress]{revtex4-2}

\usepackage{mathrsfs}
\usepackage{amsfonts}
\usepackage{amsmath}
\usepackage{amsthm,amssymb}

\usepackage{graphicx}
\usepackage{graphics}
\usepackage{color}

\usepackage{tikz}
\usepackage{amsfonts}
\usetikzlibrary{decorations.pathmorphing,patterns,calc}
\usepackage{bm}
\usepackage{pifont}
\usepackage{pgfplots}
\usepackage[absolute,overlay]{textpos}
\usepackage{MnSymbol}
\usetikzlibrary{arrows.meta}

\theoremstyle{plain}
\newtheorem{theorem}{Theorem}[section]
\newtheorem{proposition}{Proposition}[section]
\newtheorem{lemma}{Lemma}[section]

\theoremstyle{definition}
\newtheorem{comments}{Comment}
\newtheorem{definition}{Definition}[section]

\newcommand{\be}{\begin{equation}}
\newcommand{\ee}{\end{equation}}

\newcommand{\real}{\mathbb{R}}

\newcommand{\pd}[2]{\frac{\partial #1}{\partial #2}}

\newcommand{\sigmcv}{\Sigma_{McV}}
\newcommand{\mmcv}{{\cal{M}}_{McV}}
\newcommand{\sigrw}{\Sigma_{RW}}
\newcommand{\mrw}{{\cal{M}}_{RW}}
\newcommand{\mm}{{\cal{M}}}
\newcommand{\sig}{\Sigma}
\newcommand{\tsig}{t_\sig}
\newcommand{\rsig}{r_\sig}
\newcommand{\rosig}{\rho_\sig}

\newcommand{\tasig}{\tau_\sig}

\newcommand{\choice}{{1,2}}
\newcommand{\del}{\Delta}
\newcommand{\delmcv}{\Delta_{McV}}
\newcommand{\delrw}{\Delta_{RW}}

\newcommand{\cj}{{\cal{J}}}

\newcommand{\hor}{{\cal{H}}}
\newcommand{\jd}{\dot{J}}
\newcommand{\sign}[1]{\textrm{sign}({#1})}

\pgfplotsset{compat=1.16}

\begin{document}
    
\author{Brien C. Nolan}
\email{brien.nolan@dcu.ie}
\affiliation{Centre for Astrophysics and Relativity (CfAR), School of Mathematical Sciences, Dublin City
University, Glasnevin, Dublin, D09 V209, Ireland.}
\affiliation{School of Theoretical Physics, Dublin Istitute for Advanced Studies, 10 Burlington Road, Dublin D04 C932, Ireland}

\title{On a cosmological Oppenheimer-Snyder model: matching McVittie and FLRW spacetimes.}

\begin{abstract}
We consider the necessary and sufficient conditions for the smooth matching of an expanding McVittie spacetime and a spatially flat Friedmann-Lema\^{\i}tre-Robertson-Walker (FLRW) spacetime across a general hypersurface - that is, a hypersurface of arbitrary causal character, where the character possibly changes from point to point. We identify a number of special cases, and determine some no-go results. In the general case, we show that the full set of matching conditions is equivalent to a 3-dimensional non-linear system of first order ODEs, subject to a single initial value constraint. Taking the perspective that the McVittie region is specified, we prove a semi-global existence result for this system and interpret these solutions in terms of the physical set-up. Solutions exist and are unique globally to the future. In this direction, the required FLRW spacetime exists and is unique and the matching hypersurface exists almost everywhere and is unique. However, the necessary embedding conditions for the hypersurface necessarily fail at least once - but only at isolated points - along the putative hypersurface. To the past, the matching hypersurface extends to meet the past singularity of the McVittie region. We determine the causal character of the matching hypersurface in these limits, and consider the implications for the existence (or otherwise) of an isotropic source of the McVittie cosmological black hole spacetime. We find that generically, the matching hypersurface has spacelike as well as timelike portions, and so the isotropic region cannot be considered to be an interior for the McVittie exterior. This is in distinction to the Oppenheimer-Snyder model, in which an isotropic dust sphere matches across an everywhere-timelike boundary to the Schwarzschild exterior. On the basis of the sometimes-spacelike nature of the hypersurface, we conclude that an isotropic source of the McVittie spacetime does not exist in a global sense. 
\end{abstract}

\maketitle
\tableofcontents

\section{Introduction}

McVittie's spacetime \cite{mcvittie1933mass} provides an exact model of a black hole embedded in an isotropic cosmological background (i.e.\ a Friedmann-Lema\^{\i}tre-Robertson-Walker (FLRW) spacetime). The precise details of the meaning of this statement have been developed over the last 25 years. Global features of the spacetime were first considered in \cite{sussman1988spherically}, and were further developed in \cite{nolan1999point}. A crucial advance was made in \cite{kaloper2010mcvittie}, in which the authors demonstrated that in the expanding case, there is a future boundary of the past of future null infinity that lies at finite affine distance along an open set of ingoing radial null geodesics. This corresponds to the existence of a black hole horizon in the spacetime, in line with the spirit of the classical definition found in, for example, \cite{hawking1973large}. Extending across this boundary leads one to a region of Schwarzschild-de Sitter spacetime, and in fact this may be a black or white hole region of that spacetime \cite{kaloper2010mcvittie,lake2011more} (and see \cite{da2013expansion} for a characterization of when McVittie spacetimes admit this white hole extension). More recently, another key feature of the spacetime was established that further demonstrates its black hole credentials: test particles can free-fall to the horizon in finite proper time \cite{nolan2025can}. 

However, there are reasons to be cautious with this  interpretation. Principal among these is the fact that the spacetime (or rather family of spacetimes) admits a spacelike curvature singularity at area radius $r=2m$, where $m>0$ is the constant mass parameter that plays the role of the Schwarzschild mass in the spacetime. Thus McVittie spacetimes do not have a zero-volume big bang singularity. In the expanding case, this spacelike singularity forms a past boundary of the spacetime: all surfaces of constant cosmological time intersect the singularity and all future-directed causal geodesics meet the singularity at finite proper time/affine distance in the past. Thus there appears to be a significant obstruction to obtaining McVittie's metric as the solution of an initial value problem with data posed on a regular hypersurface, and the black hole horizon does not appear to be evolutionary in nature: it does not arise from a regular initial configuration - one without singularities or trapped surfaces. On the other hand, in the asymptotically flat case, it is reasonably straightforward to obtain a spherically symmetric black hole from regular initial data: this can be achieved with the Oppenheimer-Snyder model \cite{oppenheimer1939continued} describing the gravitational collapse of a homogeneous dust sphere.

The aim of this paper is to determine if there is a cosmological Oppenheimer-Snyder (OS) model: an FLRW region matched across a smooth boundary to a McVittie spacetime. To consider the FLRW region as a source for the McVittie spacetime, we should require that the boundary hypersurface $\Sigma$ is timelike - as it is in the OS case. Then we can view the FLRW region as an interior region, spatially bounded at every instant of time and with a boundary that an observer could cross into (or from) the McVittie region which has infinite extent. However, we will not impose the restriction that $\Sigma$ is timelike, firstly for reasons of generality, and secondly because this feature should emerge from the matching conditions, as it does in the OS model. Assuming a timelike boundary may be too restrictive in another sense: the matching conditions may break down not for any fundamental reason, but precisely because the matching hypersurface has been assumed to be timelike, and the equations cannot accommodate the hypersurface becoming null. So this is the problem that we consider here: can we achieve the smooth matching of a given McVittie spacetime with an FLRW spacetime? If so, what are the global features (including causal character) of the matching hypersurface? Does the matched configuration admit the interpretation of a cosmological OS model - collapse from a regular initial configuration to a McVittie black hole? 

Our aim here is to be comprehensive: to give a more or less complete account of the matching of (spatially flat) McVittie and FLRW spacetimes. We will consider general matching hypersurfaces $\Sigma$, meaning that the causal character is not fixed, and may vary from point to point \cite{mars1993geometry}. Our aim of being as complete as possible necessarily means dealing with a number of different cases (imposing some, all or none of the conditions that the matching hypersurface has open subsets on which it is null, has constant radius or is foliated by marginally trapped surfaces) that require a separate treatment. 
As an aid to reading, we provide an extended summary below. First, we review related work. 

Matching of McVittie spacetimes has been considered previously on only a small number of occasions. In \cite{nolan1993sources}, the present author derived a source for McVittie spacetime in the form of a spherically symmetric, shear free, spatially homogeneous and conformally flat solution of the Einstein equations with perfect fluid source. This was matched across a co-moving (and hence timelike) boundary to a McVittie exterior. However, this model has pressure singularities: the boundary must meet the past singularity of the McVittie region at a finite time in the past. Since the pressure of the interior matches that of the exterior (McVittie) region \cite{nolan1993sources}, it must diverge at this time - the McVittie pressure diverges at $\{r=2m\}$ (see (\ref{mcv-matter}) below). Even before reaching this point in the past, the pressure of the interior may diverge at some comoving radius less than that of the comoving boundary radius. This occurs at sufficiently early times, but times later than that at which the boundary meets $\{r=2m\}$. The pressure singularity is absent if the scale factor $a(t)$ is bounded below by the ratio of the mass parameter $m$ and comoving radius of the boundary, which is implicitly assumed in \cite{nolan1993sources}.

In \cite{nandra2013dynamics}, Nandra \textit{et al.} considered further aspects of this and related models in their broader study of the dynamics of a uniform density sphere - the interior, with homogeneous density $\rho_\textrm{i}(t)$ and Hubble function $H_\textrm{i}(t)$ - embedded in a background that also has uniform (but possibly distinct) density - the exterior, with $\rho_\textrm{e}(t), H_\textrm{e}(t)$. The regions are matched across a timelike hypersurface $r=a(t)$, where $r$ is the area radius. The model entails the mass relation 
\renewcommand\arraystretch{1.5}
\be E(t,r) = \left\{ \begin{array}{ll} \frac43 \pi\rho_\textrm{i}(t), & r\leq a(t); \\
\frac43 \pi\rho_\textrm{e}(t) + m(t), & r> a(t),
\end{array} \right. \label{Nandra-mass} \ee
where $E$ is the Misner-Sharp mass (see (\ref{ems-def}) below), and continuity of $E$ across the boundary is assumed. Remarkably, specifying that $H_\textrm{i}=H_\textrm{e}$ leads to $m=$ constant and a McVittie exterior, and to the homogeneous interior considered in \cite{nolan1993sources}. The pressure singularity discussed above is identified, and the authors seek to avoid this by taking the boundary radius to be sufficiently large. However their equations of motion for the boundary radius indicate that the necessary lower bound must be exceeded, and so the pressure singularities are also a feature of this model. The authors discuss additional features of their model, including the expanding nature of the boundary, and its reduction to the standard constant density Schwarzschild interior in the (exterior) vacuum limit. The use of area-radius coordinates greatly simplifies the form and interpretation of the interior metric. 

In \cite{haines1993thin}, the authors applied Israel's thin shell formalism \cite{israel1966singular} to consider a variety of configurations in which a Schwarzschild or FLRW interior is joined across a spherical thin shell - a timelike hypersurface with non-vanishing surface energy-stress-momentum tensor - to a McVittie exterior. The authors investigated numerically the influence of the McVittie mass parameter $m$ on the motion of the shell, finding that a larger value of $m$ leads to slower expansion and a greater propensity to collapse. A negative value of the mass parameter increases the expansion rate.  

Thin shell boundaries were also considered in \cite{tang2025matching}. Here, the authors considered the case where two McVittie spacetimes are joined across a spherical thin shell (this includes the special cases where one or other or both correspond to Minkowski spacetime, Schwarzschild spacetime or an FLRW spacetime). The analysis includes the case where an FLRW interior is joined across a thin shell to a McVittie exterior. The results reinforce and add detail to those of \cite{haines1993thin}: an analysis of the role of the initial velocity of the shell is included. 

We note that these two studies are quite different from the present one. Extending the seminal work of Israel \cite{israel1966singular}, Lake derived an equation of motion for a timelike spherical shell joining two spherically symmetric spacetimes \cite{lake1984equation}. The preliminary matching conditions (continuity of the metric across the shell) are assumed to hold. The system closes by assuming an equation of state for the matter content of the shell: \cite{haines1993thin} considers the cases of dust $P_\textrm{shell}=0$ and false vacuum $\sigma_\textrm{shell}+P_\textrm{shell}=0$ and \cite{tang2025matching} considers only dust - although energy conditions on the shell are not discussed. Here, $\sigma_\textrm{shell}$ and $P_\textrm{shell}$ are the shell's surface energy density and pressure respectively. In the present case, the equations of motion come from the matching conditions which incorporate the fact that there is \textit{no} thin shell at the boundary. Lake's equation of motion is singular in the `no shell' limit, and it is not clear how to connect the two scenarios.

\subsection{Summary}

\subsubsection*{McVittie and FLRW geometry}

In Section \ref{sec:McV-FLRW}, we review the metric and matter of McVittie and FLRW spacetimes. The McVittie spacetime is specified by a choice of positive constant $m$ (the Schwarzschild mass parameter) and Hubble function  $H(t)$, and we use coordinates corresponding to cosmic time $t$ and the area radius $r$. (Recall that $H(t)=a'(t)/a(t)$ where $a$ is the scale factor.) We can consider the FLRW spacetime to be a special case of McVittie with $m=0$. We use coordinates $(\tau,\rho)$ corresponding to the cosmic time and area radius of the FLRW spacetime, and we write the Hubble function  as $\cj(\tau)$. We also mention relevant global properties of McVittie spacetimes. We assume that both $t\mapsto H(t)$ and $\tau\mapsto \cj(\tau)$ are $C^2$, but we do not specify any further conditions. Such conditions (sign, monotonicty, past and future asymptotics) will ultimately play an important role in our main result (Theorem \ref{Thm:Match-V4}), and are discussed in detail at that point. 

\subsubsection*{The matching hypersurface and the preliminary matching conditions}

In Section \ref{sec:matching-prelim} we introduce a general hypersurface $\Sigma^+$ of McVittie spacetime (which readily specializes to a general hypersurface $\Sigma^-$ of FLRW spacetime). We describe the preliminary matching conditions, introducing the necessary vector fields on the tangent space of the matching hypersurface. We treat  $\Sigma\equiv\Sigma^\pm$ as embeddings $\Phi^\pm:\sigma\to\mathcal{M}^\pm$ of an abstract 3-manifold $\sigma$ into the respective spacetime manifolds $\mathcal{M}^\pm$. In local coordinates $\xi^a$ on $\sigma$ and $x_\pm^\alpha$ on $\mathcal{M}$, we write the embeddings as 
\be \Phi^\pm:\xi^a\mapsto x^\alpha_\pm = x^\alpha_\pm(\xi^a).\label{embedpm} \ee We can then write down the \textit{preliminary matching conditions}. These entail equality of the pull-back (under $\Phi^\pm$) to $\sigma$ of the metrics $g^{\pm}$ induced on $\Sigma^\pm$. We use local coordinates $\xi^a=(\lambda,\theta,\phi)$ on $\sigma$. The preliminary matching conditions allow us to identify the spherical coordinates $(\theta,\phi)$ on $\sigma$ with the corresponding coordinates $(\theta^\pm,\phi^\pm)$ on $\mathcal{M}^\pm$: these play no further role. Likewise, we find that the area radius is continuous across $\Sigma^\pm$ (and is as smooth on $\sigma$ as the assumed embeddings), and so we have a single area radius $r\equiv\rho$ throughout the matched spacetimes. The tangent basis of $\mathcal{M}^\pm$ at $\Sigma^\pm$ comprises 
\be \vec{e}_a^\pm = (e^\alpha_a)^\pm\left.\pd{}{x^\alpha}\right|_{\Sigma^\pm},\quad (e^\alpha_a)^\pm=\pd{x_\pm^\alpha}{\xi^a} \label{tangents} \ee the push-forward by $\Phi^\pm$ of the tangents $\partial_{\xi^a}$ of $\sigma$  to $\Sigma^\pm$, along with  rigging vectors $\vec{\ell}^\pm$ on $\Sigma^\pm$. These are vectors that are everywhere transverse to $\Sigma^\pm$. We impose the \textit{rigging compatibility conditions} 
\be g^+(\vec{\ell}^+,\vec{\ell}^+) = g^-(\vec{\ell}^-,\vec{\ell}^-),\quad \ell^+_a = \ell^-_a,\label{rig-compat} \ee
where $\ell_a^\pm = (e^\alpha_b\ell_\alpha)^\pm$.

\subsubsection*{Junction conditions}
In Section \ref{sec:junction-conditions}, we describe the full set of junction conditions that are the core concern of this paper. These arise from the requirement that there is no delta-function singularity in the Riemann tensor of the matched spacetime \cite{mars1993geometry}. We are then able to address the main purpose of this paper: to describe all cases where (expanding) McVittie spacetimes can be matched across a (general) hypersurface to an FLRW spacetime. Based on these conditions, we identify a number of subcases that require a separate treatment. We need to distinguish between cases where $\Sigma\equiv\Sigma^\pm$ is null on an open set, or only (at most) at isolated points, and between cases where the area radius is constant on (an open subset of) $\Sigma$. In each case, we identify the relevant full set of matching and junction conditions, using the general hypersurface formalism of \cite{mars1993geometry}. We rule out one of the four cases (matching across a null hypersurface of constant radius), and defer the problem of solving the matching conditions in the other three cases to Section \ref{sec:solving}. 

\subsubsection*{Consequences of matching}
In Section \ref{sec:consequences}, we consider various physical and geometric consequences of matching, and consider some special cases. In Section \ref{subsec:conseq-mass}, we consider the implications of matching for the Misner-Sharp mass of the matched spacetimes: this leads to the interpretation of McVittie spacetimes as either the exterior of an over-dense isotropic interior region, or the under-dense local interior of an isotropic exterior. In both cases, the mass parameter $m$ accounts for the `missing' matter density. 
In Section \ref{subsec:conseq-trapping}, we deal with questions of horizons and trapping, which provides a useful guide when it comes to solving the matching conditions. As shown in \cite{fayos1996general}, trapped regions of one spacetime can match only with trapped regions of the other (and likewise for anti-trapped regions). We revisit this result, and tie it to the question of time-orientability of the matched spacetime.  
Since Schwarzschild(-de Sitter) spacetime is a special case of a McVittie spacetime, our analysis should accommodate the description of the Oppenheimer-Snyder model of gravitational collapse \cite{oppenheimer1939continued} (and equivalently, the Einstein-Straus vacuole model \cite{einstein1945influence}). In Section \ref{subsec:conseq-OS}, we show how the discussion can be specialized to this case.  
In Section \ref{subsec:conseq-matter}, we consider the Israel conditions and their implications for the matter (dis)continuities at the matching hypersurface. We consider the special case of matching across a comoving timelike hypersurface, and obtain some no-go results. 

\subsubsection*{Solving the matching conditions}
Section \ref{sec:solving} contains the main results of this paper. There are four cases to consider in all: three cases carry forward from Section \ref{sec:junction-conditions}, and one of these presents two subcases. We can treat three of these four cases relatively quickly (Section \ref{subsec:sol-null-r-noncon} - null matching hypersurfaces of non-constant radius; Section \ref{subsec:sol-null-r-con} - non-null matching hypersurfaces of constant radius; Section \ref{subsubsec:hor} - non-null hypersurfaces of non-constant radius foliated by marginally trapped surfaces). This leaves what should be considered the general case of a non-null hypersurface of non-constant radius that is not foliated by marginally trapped surfaces. This is the subject of the remainder of the paper. Thus Section \ref{subsubsec:general} provides a discussion of existence and uniqueness of matching configurations (once the expanding McVittie spacetime has been specified) in this most general case. The essence of our main result (Theorem \ref{Thm:Match-V4}) is that once a given McVittie spacetime is specified (so that the parameter $m$ and Hubble function $H(t)$ are given), and an initial point of a matching hypersurface is chosen along with an initial value $\mathcal{J}_0$ of the FLRW Hubble function, then there is a unique FLRW metric and a unique matching hypersurface $\Sigma$ such that the full set of matching and junction conditions are satisfied. The FLRW spacetime must also be expanding everywhere: matching is not possible for $\mathcal{J}_0<0$. We also provide results describing various global features of the matching hypersurface that are a consequence of the matching. The matching hypersurface exists globally in the future direction, but must originate at a finite time in the at the past singularity $\{r=2m\}$ of the McVittie spacetime. Thus the matching hypersurface $\Sigma$ cannot fully excise this singularity from the matched spacetime. At early times ($t\gtrsim t_1$, where $t_1$ is the time at which the matching hypersurface reaches $\{r=2m\}$), $\tau$ - the FLRW cosmic time coordinate - is \textit{decreasing} with increasing $t$, the McVittie time coordinate. Related to this is the finding that the matching hypersurface must be initially spacelike. We also describe how the late-time causal character of $\Sigma$ depends on the background equation of state parameters of the McVittie region. 

\subsubsection*{Conclusions} 
In Section \ref{sec:conclusions}, we discuss the implications of our results (mainly those of Section 6) for the central question posed above: can we construct an isotropic source for a given McVittie spacetime?   

\subsubsection*{Notation} Many (probably most) of the equations below hold only on the matching hypersurface $\Sigma$. To avoid over-complicating things, we will not generally  use any special notation to indicate equations on $\Sigma$ (e.g.\ $\left.Q\right|_\Sigma=0$ or $Q\stackrel{\Sigma}{=}0$) - but this will be done when the emphasis seems necessary. The context should make clear when a given a equation holds only on $\Sigma$, and when it holds elsewhere. We use the common notation $[X]=X^+-X^-$ for the difference of a tensor quantity evaluated on $\Sigma$ using the embeddings on either side. We use units with $G=c=1$, and follow the curvature conventions of \cite{wald1984general}.

\section{McVittie and FLRW geometry}\label{sec:McV-FLRW}

\subsection{McVittie}\label{subsec:mcvittie}

In the McVittie region, we use coordinates $x^\alpha=(t,r,\theta,\phi)$, where $r$ is the area radius of the spherically symmetric spacetime, and $t$ is a cosmic time coordinate, with the property that the future-directed, unit fluid flow covector is parallel to $dt$. The coordinates $(\theta, \phi)$ are the usual coordinates on the unit 2-sphere. Then the McVittie metric is uniquely determined by a (positive) constant $m$ - the mass parameter - and a $C^2$ function $H(t)$, which corresponds to the Hubble function of the background FLRW spacetime in which the mass is embedded. See \cite{nolan1998point}. In these coordinates, the line element has the form
\begin{equation} ds_{McV}^2 = -\alpha dt^2-2\beta dtdr+\gamma dr^2 + r^2d\Omega^2,\label{mcv-lel}\end{equation}
where 
\begin{eqnarray}
\alpha &=& 1-\frac{2m}{r}-r^2H^2,\label{alpha-def} \\
\beta &=& r H\gamma^{1/2}, \label{beta-def}\\
\gamma &=& f^{-1},\quad f= \left(1-\frac{2m}{r}\right)\label{gam-def}
\end{eqnarray}
and $d\Omega^2=d\theta^2+\sin^2\theta d\phi^2$ is the line element of the unit 2-sphere. 

Note that 
\begin{equation} \alpha\gamma+\beta^2 = 1,\label{al-be-ga-id}\end{equation} and that the time coordinate $t$ is invariantly defined up to tranlsation. 
The metric provides a solution of Einstein's equation with a perfect fluid source, with energy density and pressure given by 
\begin{equation} 8\pi\mu_{McV}(t,r) = 3H^2(t)-\Lambda,\qquad 8\pi P_{McV}(t,r) = -2H'(t)\gamma^{1/2}-3H^2(t)+\Lambda, \label{mcv-matter}\end{equation}
where $\Lambda$ is the cosmological constant. Here and throughout, a prime ($'$) represents the derivative with respect to argument. The fluid velocity vector field is given by 
\begin{equation} \vec{u}_{McV} = \gamma^{1/2}\pd{}{t}+\beta \gamma^{-1/2}\pd{}{r}.\label{u-mcv}\end{equation}
The expansion of the fluid flow lines is given by $\theta=3H$, and so the sign of $H$ distinguishes between a collapsing ($H<0$) and an expanding ($H>0$) spacetime (or region of spacetime). There is a curvature singularity at $r=2m$ except in the special case of vanishing $H'(t)$, and the line element is defined only for $r>2m$. We assume throughout that $m>0$\footnote{It seems that the global structure McVittie spacetimes with $m<0$ have not been considered in the literature. As mentioned above, a negative value influences the shell dynamics considered in \cite{haines1993thin} and \cite{tang2025matching}.}.
We recall the following features:
\begin{enumerate}
\item With $m=0$, equation (\ref{mcv-lel}) gives the line element of a spatially flat Robertson-Walker universe with Hubble function $H(t)$;
\item With $H(t)=0$, the line element corresponds to the exterior Schwarzschild spacetime and with $H(t)=H_0=$ constant, the line element corresponds to Schwarzschild-de Sitter spacetime with mass parameter $m$ and cosmological constant $\Lambda = 3H_0^2$;
\item The singularity at $r=2m$ is spacelike, and in the expanding case, forms a past boundary of the spacetime \cite{nolan1999point};
\item In the expanding case and for $\Lambda\geq 0$, the spacetime contains a black hole horizon \cite{kaloper2010mcvittie,lake2011more,nolan2017local}.
 
\end{enumerate}

\subsection{FLRW}\label{subsec:flrw}

It will be convenient also to express the Robertson-Walker region in area-radius coordinates rather than the usual co-moving coordinates. 
We anticipate a subset of the preliminary matching conditions (continuity of the metric tensor across the matching hypersurface) by using the same labels for the coordinates on the 2-sphere in the FLRW region as we used in the McVittie region. We emphasize that this can be done without loss of generality. So in the FLRW region, we write
\begin{equation} ds_{RW}^2 = - (1-r^2\cj^2(\tau))d\tau^2 - 2\rho\cj(\tau)d\rho d\tau + d\rho^2 + \rho^2d\Omega^2, \label{rw-lel}\end{equation} 
where $\tau$ is a cosmic time coordinate (proper time along the geodesic fluid flow lines) and $\cj(\tau)$ is the Hubble function of the FLRW region. The matter terms are given by 
\begin{equation} 8\pi\mu_{RW}(\tau,r) = 3\cj^2(\tau)-\Lambda,\qquad 8\pi P_{RW}(t,r) = -2\cj'(\tau)-3\cj^2(\tau)+\Lambda. \label{rw-matter}\end{equation}
The fluid velocity vector field is given by 
\begin{equation} \vec{u}_{RW} = \pd{}{\tau}+\rho\cj\pd{}{\rho}.\label{u-rw}\end{equation}
The expansion is given by $\theta=3\cj$, and so the comment above about collapsing/expanding regions applies, with $H$ replaced by $\cj$.

We specify that both $t$ and $\tau$ increase into the future, but we do not (yet) impose any restrictions on the signs of $H$ or $\cj$. 

\section{The matching hypersurface and the preliminary matching conditions}\label{sec:matching-prelim}

We consider the following situation: there exists a hypersurface $\sigmcv$ of the McV spacetime, which separates the spacetime into two regions, $\mmcv^\choice$, so that for each choice of 1,2, $(\mmcv^\choice,g_{McV},\sigmcv)$ is a spacetime with boundary. Likewise, there exists a hypersurface $\sigrw$ of the FLRW spacetime  giving $(\mrw^\choice,g_{RW},\sigrw)$. We match $(\mmcv^\choice,g_{McV},\sigmcv)$ and $(\mrw^\choice,g_{RW},\sigrw)$ across their boundaries by constructing the manifold $\mathcal{M}=\mm^+\cup\mm^-$, where $\mm^+$ is one of $\mmcv^\choice$ and $\mm^-$ is one of $\mrw^\choice$ (four choices in all), and identifying the geometries of $\sigmcv$ and $\sigrw$.  That is, both the points and the tangent spaces of the boundary hypersurfaces are identified. (Note that we do not associate the signs $\pm$ with any notion of an interior or exterior region: the signs are simply convenient labels.) We do not specify \textit{a priori} the causal nature of $\Sigma$, and we apply the general hypersurface formalism of \cite{mars1993geometry}.

This identification of the hypersurfaces is conveniently accomplished by invoking the existence of an abstract 3-dimensional $C^3$ manifold $\sigma$, which embeds in both spacetimes with images $\sigmcv$ and $\sigrw$ respectively. See \cite{mars1993geometry}. We assume that $\sigma$ is spherically symmetric, and we introduce intrinsic coordinates 
$\xi^a=(\lambda,\theta,\phi)$ on $\sigma$ (we take $a\in\{1,2,3\}$), so that for some interval $I$,
\be \sigma=I\times\mathbb{S}^2=\{(\lambda,\theta,\phi):\lambda\in I, (\theta,\phi)\in\mathbb{S}^2\}.\label{sigma-def}\ee 

For convenience, we describe the hypersurface and associated vector and tensor fields of $(\mmcv,g_{McV},\Sigma_{McV})$, and (mostly) drop the subscripts for the time being. The corresponding formulae for $(\mrw,g_{RM},\Sigma_{RW})$ are obtained by making the substitutions 
\be (t,r,m,H(t)) \to (\tau, \rho, 0, \cj(\tau)).\label{eq:McVtoRWsubs1}
\ee
so that 
\be 
(\alpha,\beta,\gamma)\to (1-\rho^2\cj^2(\tau),\rho\cj(\tau),1).
\label{Eq:McVtoRWsubs2} 
\ee
We will use $\alpha,\beta,\gamma$ (with (\ref{al-be-ga-id}) holding) to represent generic metric functions that apply to both the McVittie and FLRW spacetimes, and as the specific values (\ref{alpha-def})-(\ref{gam-def}) relevant to McVittie only. The difference should be clear from the context. 

Then we can give the following representation of $\sig=\Sigma_{McV}$:
\begin{equation} \Sigma=\{x^\alpha\in\mm: t=\tsig(\lambda), r=\rsig(\lambda), \lambda\in I, (\theta,\phi)\in\mathbb{S}^2\}.\label{sigmcv-def}\end{equation}

The coordinate vector fields 
\begin{equation} \{\vec{e}_1= \pd{}{\lambda}, \vec{e}_2 =\pd{}{\theta}, \vec{e}_3=\pd{}{\phi}\} \end{equation} 
provide a basis for the tangent space of $\sigma$ at each point.  Using the embedding (\ref{sigmcv-def}), 
we can determine a tangent basis for $\Sigma$. In general, these vector fields have the form 
\begin{equation} \vec{e}_a = \left. e^\alpha_a\pd{}{x^\alpha}\right|_\sig,\quad e^\alpha_a=\pd{x^\alpha}{\xi^a},\end{equation}
where, as above, the embedding of $\sig$ is specified by equations of the form $x^\alpha=x^\alpha(\xi^b)$ (for ease of notation, we do not distinguish between the coordinate vector fields on $\sigma$ and their push-forward to $\Sigma$). 
Thus the basis of tangent vectors to $\Sigma$ is given by (where for convenience we introduce $\dot{t}=\dot{\tsig}$ and $\dot{r}=\dot{\rsig}$)
\begin{equation} \left.\{\vec{e}_{1} = \dot{t}\pd{}{t} + \dot{r}\pd{}{r}, \vec{e}_2,\vec{e}_3\}\right|_{\Sigma}.\label{basis-mcv}\end{equation}
The vector fields $\vec{e}_{2,3}$ play essentially no part in what follows, and so it is convenient to refer to $\vec{e}_1$ as \textit{the} tangent vector to $\sig$. 
Here and throughout, the overdot refers to differentiation with respect to the coordinate $\lambda$. We assume throughout that $\lambda\in I$, where $I$ is a maximal interval of existence of the matching hypersurface.

We introduce two other vector fields defined on the tangent space of $\mathcal{M}$ at $\Sigma$. First is the normal vector field, given by
\begin{equation} \vec{n} = \epsilon\left.\left( (-\beta\dot{t}+\gamma\dot{r})\pd{}{t} + (\alpha\dot{t}+\beta\dot{r})\pd{}{r}\right)\right|_{\Sigma},\label{norm-mcv}
\end{equation}
where we introduce $\epsilon$ with $\epsilon^2=1$. We define
\begin{equation} 
\del = \left.\alpha\dot{t}^2+2\beta\dot{t}\dot{r}-\gamma\dot{r}^2\right|_{\Sigma},\label{delmcv-def}
\end{equation}
and note that 
\begin{equation}
    g(\vec{n},\vec{n}) = -g(\vec{e}_{1},\vec{e}_{1})=\del.\label{eq:normsmcv}
\end{equation}

We will see below that $\del$ must be continuous across $\Sigma$ (\textit{cf.}\ (\ref{eq:del-cont})), and so (\ref{eq:normsmcv}) provides the following useful information, which holds independently of which embedding ($\pm$) is considered: 

\begin{lemma} The hypersurface $\Sigma$ is timelike (respectively, null, spacelike) at $p\in\sig$ if and only if $\del|_p >0$ (respectively, $\del|_p =0$, $\del|_p <0$).
\end{lemma}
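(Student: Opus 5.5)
The plan is to read the causal character of $\Sigma$ at $p$ directly off the norm of its tangent vector $\vec{e}_1$, using identity (\ref{eq:normsmcv}). First, I note that at each $p\in\Sigma$ the tangent space $T_p\Sigma$ is spanned by $\vec{e}_1,\vec{e}_2,\vec{e}_3$. Because of the block form of the metric (\ref{mcv-lel}), $\vec{e}_1$ has no angular components, so it is $g$-orthogonal to $\vec{e}_2=\partial_\theta$ and $\vec{e}_3=\partial_\phi$. The latter two have positive norms $r^2$ and $r^2\sin^2\theta$, and are orthogonal to each other away from the poles; at the poles one switches to another chart on $\mathbb{S}^2$, which changes nothing. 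Hence the induced metric on $T_p\Sigma$ is $\mathrm{diag}\bigl(g(\vec{e}_1,\vec{e}_1),\,r^2,\,r^2\sin^2\theta\bigr)$, that is, $\mathrm{diag}(-\del,r^2,r^2\sin^2\theta)$ by (\ref{eq:normsmcv}). A direct check of (\ref{eq:normsmcv}) is immediate: $g(\vec{e}_1,\vec{e}_1)=-\alpha\dot t^2-2\beta\dot t\dot r+\gamma\dot r^2=-\del$.

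Next I recall the standard classification. A hypersurface is timelike, null or spacelike at $p$ according as its induced metric is Lorentzian, degenerate or Riemannian. Equivalently, its normal is spacelike, null or timelike. The three cases then follow from the diagonal form.
\begin{itemize}
\item If $\del|_p>0$, the induced metric has signature $(-,+,+)$, so $\Sigma$ is timelike at $p$.
\item If $\del|_p=0$, the induced metric is degenerate with null direction $\vec{e}_1$, so $\Sigma$ is null at $p$.
\item If $\del|_p<0$, the induced metric is positive definite, so $\Sigma$ is spacelike at $p$.
\end{itemize}
Since these three conditions on $\del|_p$ are exhaustive and mutually exclusive, the implications reverse, which gives the ``if and only if''.

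As a consistency check, $g(\vec{n},\vec{n})=\del$ by (\ref{eq:normsmcv}), and one verifies $g(\vec n,\vec e_a)=0$ using (\ref{al-be-ga-id}). The normal $\vec n$ is therefore spacelike, null or timelike exactly as the character claimed above, provided $\vec n\neq 0$. Non-vanishing holds because the map $(\dot t,\dot r)\mapsto \vec n$ has determinant $-(\alpha\gamma+\beta^2)=-1$, and $(\dot t,\dot r)\neq(0,0)$ for an embedding.

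Finally, the statement is independent of the embedding $\pm$. The same computation applies verbatim on the FLRW side via the substitutions (\ref{eq:McVtoRWsubs1})--(\ref{Eq:McVtoRWsubs2}), and $\del$ is continuous across $\Sigma$ by (\ref{eq:del-cont}), which we take from later in the paper. There is no real obstacle here. The only point needing care is regularity: $\vec{e}_1\neq 0$ follows from $\Phi^\pm$ being an embedding, and this is what justifies both the basis claim and $\vec n\neq0$.
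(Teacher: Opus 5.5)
Your argument is correct and is essentially the paper's. There the lemma is read off directly from $g(\vec{n},\vec{n})=-g(\vec{e}_1,\vec{e}_1)=\del$ in (\ref{eq:normsmcv}), with independence of the embedding coming from $\del^+=\del^-$; your diagonal form $\mathrm{diag}(-\del,r^2,r^2\sin^2\theta)$ of the induced metric simply makes explicit a step the paper leaves tacit.

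One caveat: $\vec{e}_1\neq0$ is guaranteed only where $d\Phi$ has maximal rank. At the isolated rank-drop (cusp) points later permitted by Definition \ref{def:match}, $\dot t=\dot r=0$ forces $\del=0$ even though $\Sigma$ has no three-dimensional tangent space there. So the lemma, both in your version and the paper's, should be read at regular points.
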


We emphasize that $\vec{n}$ is determined up to a sign, corresponding to the choice of the normal as pointing out of $\mm^+$ and into $\mm^-$, or \textit{vice versa}. This choice is accommodated by the introduction of $\epsilon\in\{-1,1\}$ (and a corresponding $\delta\in\{-1,1\}$ for $\sigrw$) and we note that we are not free to choose these quantities independently (see \cite{fayos1996general}).

The hypersurface $\Sigma$ is null at points where the normal $\vec{n}$ is null - in other words, where $\del=0$. We see then that $\vec{e}_1$ is also null, and in fact $\vec{n}$ and $\vec{e}_1$ are then parallel. At such a point $p$, the tetrad $\{\vec{n}, \vec{e}_a, a=1,2,3\}$ is no longer a basis for the tangent space of $\mm$ at $p\in\Sigma$. To accommodate this situation, we introduce the second vector field mentioned above: a rigging vector $\vec{\ell}$ \cite{mars1993geometry}. This is a vector field defined on $\Sigma$ which is everywhere transverse to $\Sigma$. 

The matching conditions for a general hypersurface are expressed conveniently  using a rigging that is continuous across $\Sigma$. The necessary degree of continuity is that the spacetime norm of $\vec{l}$ is continuous across $\Sigma$, and that the tangent basis components of the rigging one-form are continuous across $\Sigma$. That is, we will seek riggings that satisfy 
\begin{equation} g^+(\vec{\ell}^+,\vec{\ell}^+) = g^-(\vec{\ell}^-,\vec{\ell}^-), \label{eq:riggin-cont-1} \end{equation}
and 
\begin{equation} \ell_a^+ =\ell_a^-,\label{eq:rigging-cont-2} \end{equation}
where 
\begin{equation} \ell_a = e^\alpha_a \ell_\alpha,\quad \ell_\alpha = g_{\alpha\beta}\ell^\beta. \label{eq:riggin-components} \end{equation}

It is convenient at this point to introduce the preliminary matching conditions, which express continuity of the spacetime metric across $\Sigma$. 

The hypersurface $\sig=\{x^\alpha\in\mm:x^\alpha=x^\alpha(\xi^b)\}$ inherits a metric from the underlying spacetime that can be represented in the form
\begin{equation} h_{ab} = g_{\alpha\beta}e^\alpha_ae^\beta_b. \label{3-metric}\end{equation}
\textit{A priori}, this means that $\sig$ may inherit different metrics when considered as the boundary hypersurface of $(\mm^+,g^+)$ and $(\mm^-,g^-)$ respectively. The \textit{preliminary matching conditions} correspond to equality of these two metrics on $\sig$:
\begin{equation} h_{ab}^+ := \left.g_{\alpha\beta}e^\alpha_ae^\beta_b\right|_{\sigmcv} = \left.g_{\alpha\beta}e^\alpha_ae^\beta_b\right|_{\sigrw} =: h_{ab}^-. \label{prelim}\end{equation} 
This yields two conditions: 
\begin{equation} 
\rsig(\lambda) = \rosig(\lambda), \label{radius-cts} 
\end{equation}
and
\begin{equation}
\delmcv(\tsig(\lambda),\rsig(\lambda)) = \delrw(\tasig(\lambda),\rosig(\lambda)),\label{del-cts}
\end{equation}
where equality holds for all $\lambda\in I$. We can express these in the more compact form 
\begin{eqnarray} 
r^+ &=& r^-,\label{eq:r-cont}\\
\del^+ &=& \del^-.\label{eq:del-cont} 
\end{eqnarray}

The preliminary matching condition (\ref{radius-cts})
allows us to adopt the single symbol $r$ to represent both variables on $\Sigma$: this will be used where convenient. Further, we can take tangential derivatives of (\ref{radius-cts}) to obtain continuity across $\Sigma$ of $\dot{r}$ and $\ddot{r}$ (we assume $C^2$ embeddings on $\Sigma$ in $\mm^\pm$):

\begin{equation} \dot{r}^+ =\dot{r}^-,\qquad \ddot{r}^+=\ddot{r}^-. \label{eq:r-derivatives-cts}\end{equation}

Likewise, tangential derivatives of $\del$ are also continuous across $\Sigma$. 

There is a useful consequence of (\ref{eq:del-cont}) which is worth flagging. From (\ref{eq:normsmcv}) and (\ref{del-cts}), we see that $\vec{e}_{1,McV}$ and $\vec{e}_{1,RW}$ have the same norm. Both are parallel to the unique tangent direction of $\sig$ orthogonal to both $\vec{e}_2$ and $\vec{e}_3$, and must yield the same sign acting on $t$ (respectively $\tau$: $t$ and $\tau$ both increase into the future). Thus we may write 
\begin{equation} \vec{e}_{1,McV}=\vec{e}_{1,RW}.\label{tgt-equal}\end{equation}
This completes the full identification of the tangent bases of $\Sigma^\pm$: we have 
\begin{equation} {\vec{e}_a}^+ = {\vec{e}_a}^-,\quad a=1,2,3. \label{eq:tgt-cont} \end{equation}

With these continuous (across $\Sigma$) functions and tangent vectors in hand, we can define the following rigging: 
\begin{equation} \vec{\ell} = \left. \left( \left( (1+\sqrt{2}\beta)\dot{t}-\sqrt{2}\gamma\dot{r}\right)\pd{}{t}+\left(-\sqrt{2}\alpha\dot{t}+(1-\sqrt{2}\beta)\dot{r}\right)\pd{}{r}\right) \right|_\Sigma. \label{eq:rig-def} \end{equation}

This is defined so that the preliminary matching conditions ensure continuity of $g(\vec{\ell},\vec{\ell})$ and $\ell_a$ across $\Sigma$, and indeed
\begin{equation} g(\vec{\ell},\vec{\ell}) = \del, \label{eq:rig-norm} \end{equation}
and 
\begin{equation} \ell_a = -\del (d\lambda)_a. \label{eq:rig-cpt}\end{equation}

But this is not the whole story. We see from (\ref{eq:rig-norm}) that $\vec{\ell}$ is null precisely at null points of $\Sigma$. Indeed we have
\be g(\vec{n},\vec{\ell}) = -\sqrt{2}\Delta, \label{eq:g-n-l} \ee 
showing that $\vec{\ell}$ fails to be a rigging at such points: we can show that $\vec{\ell}$ is parallel to $\vec{e}_1$ at null points.
This seems to defeat the purpose of introducing a rigging. However, as we will see below, we can circumvent this problem at isolated null points\footnote{Since $\Sigma$ has the form $I\times\mathbb{S}^2$, the ``isolated points" referred to are in fact 2-spheres $\{p\}\times\mathbb{S}^2$, with $p$ an isolated point of $I$. For convenience, we will continue to refer to isolated points: all of the interesting stuff happens in $I$.} of $\Sigma$ by applying a continuity argument. Using the rigging introduced here is essential to this argument. In the case where $\Sigma$ is null on an open subset, a separate approach is required - see below.

\section{Junction conditions}\label{sec:junction-conditions}

The preliminary matching conditions express continuity of the spacetime metric across $\sig$. These have been considered above. The junction conditions express continuity of the (extrinsic) curvature of $\sig$ embedded in  $\mm=\mm^+\cup\mm^-$. 
For a general hypersurface, the junction conditions are expressed in terms of the \textit{rigged fundamental form}, which is defined by 
\begin{equation} Y_{ab} = e^\alpha_a e^\beta_b\nabla_\beta \ell_\alpha. 
\label{eq:rigged-ff}\end{equation}
The junction conditions for a general hypersurface are
\begin{equation} Y_{ab}^+ = Y_{ab}^-. \label{eq:junctionH}\end{equation}

Having chosen a continuous rigging (in the sense of (\ref{eq:riggin-cont-1}) and (\ref{eq:rigging-cont-2})), the preliminary matching conditions (\ref{prelim}) and the junction conditions (\ref{eq:junctionH}) provide necessary and sufficient conditions for the absence of singularities (in the form of delta functions concentrated on $\Sigma$) in the Riemann tensor of $\mm=\mm^+\cup\mm^-$. These conditions are independent of the choice of (continuous) rigging \cite{mars1993geometry}.  For clarity, we recap on the full set of conditions required for the continuous matching of two spacetimes across a shared boundary. We phrase this as a definition.  

\begin{definition}\label{def:match} Consider two spacetimes with boundaries $(\mm^+,g^+,\Sigma^+)$ and $(\mm^-,g^-,\Sigma^-)$. We say that these spacetimes \textbf{match continuously} acrosss $\Sigma^+\equiv\Sigma^-$ if and only if the following \textbf{matching conditions} hold: 
\begin{enumerate} \item There is a $3-$dimensional manifold $\sigma$ and a pair of $C^3$ mappings $\Phi^\pm:\sigma\to \mm^\pm$ where $\Phi^\pm$ is a homeomorphism of $\sigma$ onto $\Sigma^\pm$  and where for each $p\in\sigma$, the push-forward \be \left.d\Phi^\pm\right|_p:T_p(\sigma)\to T_{\Phi^\pm(p)}(\mm^\pm) \label{d-phi-pm} \ee
has rank three except at isolated points of $\sigma$.
\item $\Sigma^+=\Phi^+(\sigma)$ and $\Sigma^-=\Phi^-(\sigma)$ are identified in a pointwise manner (so that $\Sigma^+\ni\Phi^+(p)\equiv\Phi^-(p)\in\Sigma^-$ for all $p\in\sigma$, and we can set $\Sigma:=\Sigma^+\equiv\Sigma^-$), the tangent spaces are identified as above, and these hypersurfaces are isometric in the sense that 
\be h_{ab}^+=h_{ab}^-\label{eq:isometric} \ee
everywhere on $\Sigma$.
\item There exist rigging vectors $\vec{\ell}^\pm$ defined on $T(\Sigma^\pm)$ which are transversal everywhere and which obey the continuity conditions 
\be g^+(\vec{\ell}^+,\vec{\ell}^+)=g^-(\vec{\ell}^-,\vec{\ell}^-),\quad \ell^+_a = \ell^-_a \label{rig-con-recap} \ee
everywhere on $\Sigma$.
\item The junction conditions 
\begin{equation} Y_{ab}^+ = Y_{ab}^-. \label{eq:junction-recap}\end{equation}
hold throughout $\Sigma$.  
\end{enumerate}
\end{definition}

In line with the discussion above, we refer to items (i) and (ii) here as the preliminary matching conditions, item (iii) as the rigging compatibility conditions and item (iv) as the junction conditions. As indicated, we will refer to (i)-(iv) as the matching conditions.

\begin{comments}
    As we have seen, the embeddings $\Phi^\pm$ have the general form 
    \be \Phi:\sigma = I\times\mathbf{S}^2\to \mm: (\lambda,\theta,\phi)\mapsto (t(\lambda),r(\lambda),\theta,\phi). \label{embedding} \ee
    It follows that $d\Phi$ has rank three if and only if $\dot{t}$ and $\dot{r}$ are not both zero. The rank drops to two if $\dot{t}=\dot{r}=0$, and at such a point, the tangent vanishes. This creates a fundamental problem if the maximal rank condition fails on an open subset, but does not create such difficulties if it occurs at an isolated point: the condition corresponds to a cusp of the embedded hypersurface (see Figure 1). Thus the \textit{maximal rank} condition 
    \be (\dot{t}^\pm(\lambda))^2+(\dot{r}^\pm(\lambda))^2 \neq 0,\quad \lambda\in I \label{max-rank} \ee
    must be satisfied almost everywhere on $\Sigma$. Likewise, all derivatives ($\dot{t}^\pm, \dot{r}^\pm$) must be finite. 
\end{comments}

\begin{center}
    \begin{figure}
    \centering
        \includegraphics[width=4cm]{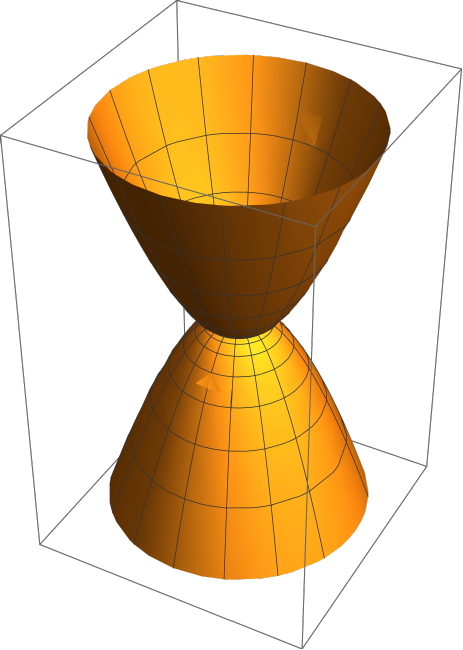}
        \caption{The figure shows the embedding $\Psi: \sigma=\real\times\mathbb{S}^1\to\real^3$ with
        $(\lambda,\phi)\in\sigma\mapsto ((1+\lambda^2)\cos\phi,(1+\lambda^2)\sin\phi,\lambda^3)$. The rank of the mapping drops from (the maximal value) two to one on the circle $S=\{x^2+y^2=1,z=0\}$. At points of this circle, the tangent $d\Psi(\partial_\lambda)$ vanishes, and every vector in the tangent space of $\real^3$ at $p\in\Psi(\sigma)$ is a normal to the surface. Geometric objects (tangents, normals, riggings) are well-defined away from $S$, and can be pushed through $S$ by continuity.}   
    \end{figure}
\end{center}

To tease out the full content of the matching conditions, there appears to be no choice but to consider a number of different cases. Before embarking on this, we restate our aims. These are to determine for which McVittie spacetimes there exists an FLRW spacetime and a matching hypersurface such that the matching conditions of Definition \ref{def:match} are satisfied. We will attempt to understand the associated global structures - of the spacetimes, and of the matching hypersurface. We will be concerned with (global) existence and uniqueness questions, and with the causal character of the matching hypersurface. 

We conclude this section by deriving the full sets of matching and junction conditions that arise in the different cases that must be considered. We defer the discussion of the solutions of these sets of equations to Section 6.

\subsection{Null matching hypersurfaces.}\label{subsec:null-matching}

We consider here the case where $\del=0$ on an open subset of $\Sigma$. Then the tangent vector $\vec{e}_1$ takes one of the forms

\begin{equation} \vec{e}_1 = \left. \dot{t}\pd{}{t}+\frac{1}{\gamma}(\beta\pm1)\dot{t}\pd{}{r}\right|_\Sigma. \label{eq:null-tgt} \end{equation}
We define a null rigging using the transverse null direction: 
\begin{equation} \vec{\ell}=\left. -(\beta\pm1)\pd{}{t}+\alpha\pd{}{r}\right|_\Sigma. \label{eq:null-rigging} \end{equation}
Then we find 
\begin{equation} \ell_a = \pm2\dot{r}(d\lambda)_a. \label{eq:null-rig-cont2}\end{equation}
Since $g(\vec{\ell},\vec{\ell})=0$, we see that both of the conditions (\ref{eq:riggin-cont-1}) and (\ref{eq:rigging-cont-2}) required for a continuous rigging are satisfied. 

In this case, we find that in both cases, the junction condition $Y_{22}^+=Y_{22}^-$ is 
\begin{equation}\alpha^+ = \alpha^-, \label{eq:al-cts} \end{equation} so that 
\be H^2 + \frac{2m}{r^3} = \cj^2,\label{eq:al-cts1} \ee
and $Y_{11}^+=Y_{11}^-$ reads
\begin{equation}
H'(t)f^{1/2}\dot{t}^2 = \cj'(\tau)\dot{\tau}^2. \label{eq:H11-null-final-pm} \end{equation}
The radial function $r$ and its tangential derivatives are also continuous across $\Sigma$, and so using (\ref{eq:null-tgt}) we see that $\dot{r}^+=\dot{r}^-$ (i.e.\ $\vec{e}_1^+(r)=\vec{e}_1^-(r)$ reads
\be (rHf^{1/2}\pm f)\dot{t} = (r\cj\pm1)\dot{\tau}. \label{eq:null-dotr} \ee

In the case where $r=r_0$ is constant along $\Sigma$, from $\Delta^+=\Delta^-=0$ we find 
\be \alpha\dot{t}^2 = (1-r_0^2\cj^2)\dot{\tau}^2 = 0. \label{del-null-r0} \ee
Neither $\dot{t}$ nor $\dot{\tau}$ can vanish on open subsets, and so we must have 
\be 1-\frac{2m}{r_0}-r_0^2H^2(t) = 1-r_0^2\cj^2(\tau) = 0.\label{del-null-alpha} \ee
Then both $H$ and $\cj$ must be constant, and we are reduced to trivial cases. So a non-trivial matching across a null hypersurface of constant radius is not possible.

\subsection{Non-null matching hypersurfaces.}\label{subsec:non-null-matching}

We deal here with general matching hypersurfaces that may change their causal character, but do so only at isolated null points. There is a further distinction between hypersurfaces on which the tangential derivative $\dot{r}$ vanishes only at isolated points and hypersurfaces admitting open subsets on which $r|_\Sigma$ is constant - i.e.\ constant radius hypersurfaces. We deal with this case first.

\subsubsection{Non-null hypersurfaces of constant radius.}

In this case we have $\del\neq0$ except possibly at isolated points, and 
\begin{equation} r=r_0 \label{eq:r0} \end{equation}
where $r_0$ is constant. $\del$ is continuous across $\Sigma$, along with its tangential derivatives. 
The tangent to $\Sigma$ in the radial 2-space is $\vec{e}_1=\dot{t}\partial_t$, and for the rigging, we can use the normal vector 
\begin{equation} \vec{n} = \left.-\beta\dot{t}\pd{}{t}+\alpha\dot{t}\pd{}{r}\right|_\Sigma .\label{eq:r0-normal} \end{equation} 
We find $\del=\alpha\dot{t}^2$, and so we have 
\begin{equation} \alpha^+(\dot{t}^+)^2 = \alpha^-(\dot{t}^-)^2.\label{eq:r0-perlim1} \end{equation}
The normal is also a rigging for $\Sigma$, and we have
\be g^\pm(\vec{l}^\pm,\vec{l}^\pm)=\alpha^\pm(\dot{t}^\pm)^2 = \Delta^\pm, \label{gll-r-constant} \ee
and 
\be \ell_a^\pm =0, \label{ella-pm} \ee
so that the rigging compatibility conditions are satisfied. The junction conditions are (\ref{eq:junctionH}):
\begin{eqnarray}
Y_{11}^+ &=& Y_{11}^-,\label{H11}\\
Y_{22}^+ &=& Y_{22}^-. \label{H22}
\end{eqnarray}

The latter yields 
\begin{equation} \alpha^+\dot{t}^+ = \alpha^-\dot{t}^-. \label{eq:r0-jct1} \end{equation}
Since we are at a non-null point of $\Sigma$ so that $\del=\alpha\dot{t}^2\neq0$, neither $\alpha$ nor $\dot{t}$ can vanish, and so from (\ref{eq:r0-perlim1}) and (\ref{eq:r0-jct1}) we have 
\begin{equation} \alpha^+ = \alpha^-,\quad \dot{t}^+ = \dot{t}^-. \label{eq:r0-main} \end{equation} 
The first of these is equivalent to 
\begin{equation} H^2(t)+\frac{2m}{r_0^3}  = \cj^2(\tau), \label{eq:r0-mass-match}\end{equation}
and by the second, time derivatives also match, giving 
\begin{equation} H(t)H'(t) = \cj(\tau)\cj'(\tau). \label{eq:r0-HJ-match} \end{equation}
At isolated null points, these also hold by continuity. 

We find 
\begin{equation} Y_{11}=-\frac12\dot{t}^3(\beta\partial_t \alpha+\alpha\partial_r\alpha-2\alpha\partial_t\beta), \label{eq:r0-H11} \end{equation}
and so $Y_{11}^+=Y_{11}^-$ yields 
\begin{eqnarray} (\frac{2m}{r_0^2}-2r_0H^2(t))(1-\frac{2m}{r_0}-r_0^2H^2(t)) - 2r_0(1-\frac{2m}{r_0})^{1/2}H'(t)  && \nonumber\\  = -2r_0\cj^2(\tau)(1-r_0^2\cj^2(\tau)) -2r_0\cj'(\tau). &&
\label{eq:r0-main1} \end{eqnarray}

\subsubsection{Non-null hypersurfaces of non-constant radius.}\label{subsub:non-null-r-noncon}

In this case, we use the rigging (\ref{eq:rig-def}) introduced in Section 2 above. Define 
\begin{equation} \Gamma=\alpha\dot{t}+\beta\dot{r}.\label{eq:Gam-def} \end{equation} Using (\ref{eq:r-derivatives-cts}), we find that $[Y_{22}]=0$ is equivalent to 
\begin{equation} \Gamma^+ = \Gamma^-, 
\label{jct22}
\end{equation}
which can be written 
\begin{equation} \left.\alpha\dot{t}+\beta\dot{r}\right|_\Sigma = \left.(1-r^2\cj^2)\dot{\tau}+r\cj\dot{r}\right|_\Sigma. \label{eq:H22-full}\end{equation}

To consider the junction condition $Y_{11}^+=Y_{11}^-$,  we deal first with points of $\Sigma$ where $\dot{r}\neq 0$. At such points, we find that 
\begin{equation} Y_{11} = \sqrt{2}\frac{\del}{\dot{r}}\dot{\Gamma}-\frac12\left(1+\sqrt{2}\frac{\Gamma}{\dot{r}}\right)\dot{\del} - \frac{1}{\sqrt{2}}\frac{\del}{\dot{r}}(\partial_t\alpha\dot{t}^2+2\partial_t\beta\dot{t}\dot{r}). \label{eq:H11a} \end{equation}
Invoking the continuity of $r,\del$ and $\Gamma$ and their tangential derivatives, using (\ref{alpha-def})-(\ref{gam-def}) and $\Delta\neq0$, we find that $[Y_{11}]=0$ is equivalent to 
\begin{equation} \Pi^+ = \Pi ^-, \label{eq:pi-cont} \end{equation}
where 
\begin{equation} \Pi = \gamma^{-1/2}H'\dot{t}(\beta\dot{t}-\gamma\dot{r}). \label{eq:pi-def} \end{equation}

If $\dot{r}=0$ at an isolated point of $\Sigma$, then $[Y_{11}]=0$ again leads to (\ref{eq:pi-cont}) by continuity. 

Thus at non-null points of $\Sigma$, the full matching conditions consist of (\ref{eq:r-cont}), (\ref{eq:del-cont}), (\ref{jct22}) and (\ref{eq:pi-cont}). For $C^1$ metric functions and $C^2$ embeddings (the latter amounting to the requirement that the mapping $\lambda\mapsto (t_\Sigma(\lambda),r_\Sigma(\lambda),\tau_\Sigma(\lambda),\rho_\Sigma(\lambda))$ is $C^2$), these conditions extend to isolated null points of $\Sigma$ by continuity.

One consequence of these conditions is worth flagging at this point. We have the identity 
\begin{equation}\Gamma^2 = \alpha\del + \dot{r}^2. \label{eq:Gam-id} \end{equation}
Then the preliminary matching conditions $\del^+=\del^-$ and $\dot{r}^+=\dot{r}^-$, the junction condition $\Gamma^+=\Gamma^-$ and the non-null condition $\del\neq0$ lead to 
\begin{equation}\alpha^+ = \alpha^-.\label{eq:alpha-cts-again} \end{equation}
This holds at non-null points and extends to isolated null points by continuity.

\section{Consequences of matching}\label{sec:consequences}

In Section \ref{sec:solving}, we will address the questions of existence and uniqueness of matching hypersurfaces. Before doing so, we consider some physical and geometric consequences of matching. 

\subsection{Continuity of the mass}\label{subsec:conseq-mass}

In spherically symmetric spacetimes, the Misner-Sharp mass $E$ is defined by 
\begin{equation} E= \frac{r}{2}(1-\chi),\quad \chi=g^{\alpha\beta}\nabla_\alpha r\nabla_\beta r.\label{ems-def}\end{equation} 
This has a number of properties that lead to its characterization as the gravitational energy in spherical symmetry - see \cite{hayward1996gravitational}. This quantity is sometimes referred to as the Hawking mass, being the spherically symmetric case of the more general mass function introduced in \cite{hawking1968gravitational}. The quantity $E$ seems to have made its first appearence in \cite{misner1964relativistic}. 

For the McVittie family of spacetimes with line element (\ref{mcv-lel}), we find $\chi=\alpha$ and so 
\begin{equation}E_{McV} = m + \frac{r^3}{2}H^2 = m + \frac43\pi r^3\mu_{McV} + \frac{\Lambda}{6}r^3. \label{eq:E-mcv} \end{equation}
Thus we see that the matching condition $\alpha^+=\alpha^-$, which arises in all cases of $\Sigma$ (non-null, null, constant or non-constant radius) is equivalent to the continuity across $\Sigma$ of the Misner-Sharp mass, $E^+=E^-$. This equation can be written 
\begin{equation} \left.m+\frac{r^3}{2}H^2\right|_\Sigma = \left.\frac{r^3}{2}\cj^2\right|_\Sigma.\label{ems-cts}\end{equation}

This equation may be considered to be the equation of motion of $\sig$: we can rearrange (and change the emphasis) to obtain 
\begin{equation}\rsig^3(\lambda) = \frac{2m}{\cj^2(\tasig(\lambda))-H^2(\tsig(\lambda))}. \label{eom} \end{equation}
Furthermore, this equation has a direct physical interpretation. Using (\ref{mcv-matter}) and (\ref{rw-matter}), (\ref{ems-cts}) may be written as
\begin{equation} \left.\frac{m}{r^3} + \frac{4\pi}{3}\mu_{McV}(t)\right|_\Sigma = \left. \frac{4\pi}{3}\mu_{RW}(\tau)\right|_\Sigma.\label{overdensity} \end{equation}

This indicates that, with $m>0$, the FLRW region is over-dense with respect to the McVittie region. Pending the successful construction of the matching hypersurface, this adds further to the interpretation of the McVittie metric: this represents a solution of the Einstein equations with locally spatially homogeneous energy density, which (i) is exterior to a central over-dense, isotropic interior region or (ii) is a central under-dense local system, interior to an isotropic exterior. These may be considered to be analogous to the Oppenheimer-Snyder and the Einstein-Straus constructions respectively, but with non-zero density throughout the spacetime. See also \cite{haines1993thin} for a corresponding interpretation in terms of shell quantities, potential energies and mass deficits in the FLRW/shell/McVittie model.

\subsection{Radial null geodesics, trapping and horizons.}\label{subsec:conseq-trapping}
The mass continuity equation (\ref{ems-cts}) may be written as
\begin{equation} 1-\frac{2m}{r}-r^2H^2 = 1-r^2\cj^2.\label{m1c-alt}\end{equation}
These quantities ($\alpha=1-2m/r-r^2H^2$ in the McVittie region, $1-r^2\cj^2$ in the Robertson-Walker region) mark the location of horizons in the respective spacetimes: 

The trapped nature of a region of the spacetime is determined by the signs of the null expansions $\theta^{(1,2)}$ defined  by 
\begin{equation} {\cal{L}}_{{\vec{n}}^{(1,2)}}\omega  = \theta^{(1,2)}\omega, \label{null-exp} \end{equation}
where ${\cal{L}}$ is the Lie derivative, $\omega = r^2 \sin\theta d\theta \wedge d\phi$ is the area 2-form of the 2-spheres of the spacetime (i.e.\ the orbits of the $SO(3)$ symmetry group) and ${\vec{n}}^{(1,2)}$ are the unique (up to scaling by positive factors) future-pointing null directions orthogonal to the 2-spheres (we will refer to these as radial null directions).  The signs of $\theta^{(1,2)}$ are therefore invariant. In fact these null directions are tangent to null geodesics, and choosing a scaling that corresponds to affine parametrisation, we have the equivalent definition in terms of null divergences: $\theta^{(1,2)}=\nabla\cdot{\vec{n}}^{(1,2)}$. We can take (with the upper sign to corresponding to the index $(1)$, and the lower sign to $(2)$)
\begin{equation} \vec{n}^{(1,2)} = \gamma\pd{}{t}+(\beta\pm1)\pd{}{r}, \label{null-dirs-mcv}\end{equation}
which gives
\begin{equation} \theta^{(1,2)} = \frac{2}{r}(\beta\pm1). \label{null-exp-mcv} \end{equation}
Then 
\begin{equation} \theta^{(1)}\theta^{(2)} = \frac{4}{r^2}(\beta^2-1) = -\frac{4\alpha\gamma}{r^2}. \label{alpha-null-exp} \end{equation}
These formulas apply in both the McVittie region and the FLRW region with the appropriate substitutions for the metric functions (and time coordinate). 

By definition, a region of spacetime is untrapped (or regular) if and only if $\theta^{(1,2)}$ have opposite signs, giving $\alpha>0$. (Recall that $\gamma=(1-2m/r)^{-1}>0$ in the McVittie region, and $\gamma=1$ in the FLRW region.) A region is trapped if the null expansions are both negative (yielding $\alpha<0$): note that this can only occur if the region is collapsing ($H<0$ in McVittie, $J<0$ in FLRW), and a region is anti-trapped if the null expansions are both positive (also yielding $\alpha<0$), which can occur only during an expanding phase $(H>0, J>0)$. 

We define a horizon $\hor$ to be a hypersurface foliated by marginally trapped 2-spheres, so that either $\theta^{(1)}$ or $\theta^{(2)}$ vanishes throughout $\hor$. Thus $\alpha=0$ on a horizon of McVittie spacetime, and $1-r^2\cj^2=0$ on a horizon of a Robertson-Walker spacetime. 

Then (\ref{ems-cts}) has the additional interpretation that the regular region of McVittie matches across $\Sigma$ with the regular region of the FLRW spacetime \cite{fayos1996general}. A trapped region cannot match to an untrapped region, and furthermore, can only match with a trapped region \cite{fayos1996general} - that is, not with an anti-trapped region.

\subsection{Oppenheimer-Snyder}\label{subsec:conseq-OS}

Schwarzschild-de Sitter spacetime is a special case of a McVittie spacetime, and so it should be possible to recover the Oppenheimer-Snyder model in the framework being considered here. Recall that this model entails a collapsing homogeneous dust sphere with Schwarzschild exterior \cite{oppenheimer1939continued}. The sphere collapses through the Schwarzschild radius and down to zero radius.  The approach taken in \cite{oppenheimer1939continued} is to consider a spherically symmetric spacetime filled with inhomogeneous dust. This model is fully described by two free functions of a comoving radial coordinate $R$, the mass and the energy functions (see e.g.\ Section 22.7 of \cite{griffiths2009exact} for a brief summary). In \cite{oppenheimer1939continued}, these are chosen to correspond to an initially \textit{homogeneous} dust sphere for $R\leq R_0$, and vacuum for $R>R_0$. These conditions are shown to maintain in the evolution, and a condition that the sphere is initially collapsing also continues to hold. The collapse proceeds to zero radius as measured by an observer comoving with the dust particles. 

We recover this model by considering the most general matching of a Schwarzschild-de Sitter spacetime with a (spatially flat) FLRW spacetime across a non-null hypersurface. Thus we take $H=H_0$ to be a non-negative constant (with the cosmological constant $\Lambda=3H_0^2$), and so 
\be \alpha^+ =1-\frac{2m}{r}-r^2H_0^2,\quad \beta^+ = rH_0(\gamma^+)^{1/2},\quad \gamma^+=(1-\frac{2m}{r})^{-1},\label{SdS-metric} \ee
and 
\be \alpha^-=1-r^2\cj^2(\tau),\quad\beta^-=r\cj(\tau),\quad \gamma^-=1. \label{FLRW-for-SdS} 
\ee 

Since $\Sigma$ is assumed non-null, the matching and junction conditions of Section \ref{subsec:non-null-matching} hold. We can rule out the case of a constant radius matching hypersurface. With $H=H_0$ constant, (\ref{eq:r0-mass-match}) gives $\cj=$ constant. Since $\alpha^+=\alpha^-\neq0$ (see (\ref{eq:r0-main1}) and the preceding comments), equation (\ref{eq:r0-main}) yields an equation that, along with (\ref{eq:r0-mass-match}), gives $m=0$, contradicting $m>0$. Thus the relevant matching conditions are those of Section 4.2.2. 

These are equivalent to (\ref{jct22}), (\ref{eq:pi-cont}) and (\ref{eq:alpha-cts-again}) which can be written as, respectively, 
\begin{align} 
(1-\frac{2m}{r}-r^2H_0^2)\dot{t}+rH_0(1-\frac{2m}{r})^{-1/2}\dot{r} &= (1-r^2\cj^2)\dot{\tau}+r\cj\dot{r},\label{OS-m1} \\
0 &= \cj'\dot{\tau}(r\cj\dot{\tau}-\dot{r}),\label{OS-m2} \\
H_0^2+\frac{2m}{r^3} &= \cj^2. \label{OS-m3} 
\end{align} 

Two cases arise from (\ref{OS-m2}). 

If $\cj'\dot{\tau}=0$, then either $\cj=\cj_0$ is constant or $\dot{\tau}=0$. In the latter case, we find that the same conclusions hold, so we assume that $\cj'=0$. The tangential derivative of (\ref{OS-m3}) then gives $\dot{r}=0$, and so $r=r_0$ is constant along the matching hypersurface, which is a contradiction.

The second case of (\ref{OS-m2}) gives 
\be r\cj\dot{\tau}-\dot{r}=0. \label{OS-comove} \ee
As we will see in the next section, this is the condition that $\Sigma$ is comoving in the FLRW spacetime. Then taking the tangential derivative of (\ref{OS-m3}) and substituting for $\dot{r}$ yields 
\be -2\cj'-3\cj^2+\Lambda=0,\label{OS-pressure} \ee
which is the condition that the FLRW region is pressure free (\textit{cf.} (\ref{rw-matter})). The comoving condition is readily solved to give 
\be r(\tau) = r_0\left(\frac{a(\tau)}{a_0}\right),\label{OS-r} \ee
where $a$ is the scale factor of the FLRW spacetime. The remaining equation (\ref{OS-m1}) relates the rate of change of the time coordinate $t$ of the Schwarzschild-de Sitter region to the proper time of the FLRW region. We have $\Delta=\dot{\tau}^2$, so the matching hypersurface is timelike everywhere. Taking the FLRW region to correspond to $\{r<r(\tau)\}$ gives the Oppenheimer-Snyder model: with the FLRW region occupying $\{r>r(\tau)\}$, we obtain the Einstein-Straus model \cite{einstein1945influence}. 

We can summarize as follows. 

\begin{proposition}
    Let $(M^+,g^+$) be a Schwarzschild-de Sitter spacetime and let $(M^-,g^-)$ be a spatially flat FLRW spacetime. Then the FLRW spacetime has zero pressure, and the matching hypersurface is comoving in the FLRW spacetime. This corresponds to the Oppenheimer-Snyder/Einstein-Straus models. 
\end{proposition}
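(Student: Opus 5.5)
The plan is to go through the case split of Section \ref{sec:junction-conditions}, ruling out every case except the non-null, non-constant-radius one, and then solving the matching conditions there. We set $H=H_0\geq 0$ constant, with the metric functions (\ref{SdS-metric}) and (\ref{FLRW-for-SdS}).

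\textbf{Null hypersurfaces.} First dispose of hypersurfaces that are null on an open set. If the radius is constant, Section \ref{subsec:null-matching} already excludes a nontrivial matching, since (\ref{del-null-alpha}) forces $1-2m/r_0-r_0^2H_0^2=0$ and $\cj$ constant. If the radius is non-constant, the condition (\ref{eq:H11-null-final-pm}) with $H'=0$ gives $\cj'\dot\tau^2=0$. Hence $\cj$ is constant on an open set (note $\dot\tau\neq0$ generically, by (\ref{eq:null-dotr})). Differentiating (\ref{eq:al-cts1}) along $\Sigma$ then gives $-6m\dot r/r^4=0$, so $\dot r=0$, which contradicts non-constant radius.

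\textbf{Non-null, constant radius.} Here (\ref{eq:r0-mass-match}) forces $\cj$ constant. Then (\ref{eq:r0-main1}) and (\ref{eq:r0-mass-match}) together give an algebraic relation. Expanding $2r_0\cj^2(1-r_0^2\cj^2)$ with $\cj^2=H_0^2+2m/r_0^3$, and using $\alpha\neq0$, should reduce this relation to $m=0$. That contradicts $m>0$.

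\textbf{Non-null, non-constant radius.} The conditions become (\ref{OS-m1})--(\ref{OS-m3}), where (\ref{OS-m2}) comes from $\Pi^+=\Pi^-$ with $\Pi^+=0$. Split (\ref{OS-m2}) into two cases.

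If $\cj'\dot\tau=0$ on an open set, then either $\cj$ is constant there, or $\dot\tau=0$, which again makes $\cj(\tau)$ constant along $\Sigma$. Either way, differentiating (\ref{OS-m3}) gives $\dot r=0$, which is a contradiction.

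So on a dense open set, $\dot r=r\cj\dot\tau$. Comparing with (\ref{u-rw}), this says $\vec e_1$ is parallel to $\vec u_{RW}$: $\Sigma$ is comoving in FLRW, and $r\propto a(\tau)$ gives (\ref{OS-r}). Now differentiate (\ref{OS-m3}) to get $-6m\dot r/r^4=2\cj\cj'\dot\tau$. Substitute $\dot r$ and then $2m/r^3=\cj^2-H_0^2$. This yields $\cj'=-\tfrac32(\cj^2-H_0^2)$, which is exactly $P_{RW}=0$ by (\ref{rw-matter}) with $\Lambda=3H_0^2$.

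\textbf{Remaining checks.} Each relation holds along $\Sigma$, which meets every $\tau$ in its range. So $P_{RW}=0$ on the relevant region, and the conclusions extend by continuity to isolated exceptional points. Next, $\Delta^-=(1-r^2\cj^2)\dot\tau^2-2r\cj\dot\tau\dot r+\dot r^2=\dot\tau^2>0$, so $\Sigma$ is timelike. Finally, (\ref{OS-m1}) merely determines $\dot t$, and (\ref{eq:del-cont}) is then consistent via $\alpha^+=\alpha^-$ and (\ref{eq:Gam-id}). Choosing which side of $\Sigma$ is FLRW gives the OS or Einstein--Straus configuration.

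The main obstacle is the algebra in the constant-radius case, and making the open-set versus isolated-point case split rigorous, in particular excluding mixtures of the two branches of (\ref{OS-m2}) by continuity.
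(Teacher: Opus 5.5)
\paragraph{Verdict.} For non-null $\Sigma$ your argument is the paper's, step for step. The gap is in the extra null case you add: its constant-radius subcase is not excluded by the argument you cite.

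\paragraph{Non-null case.} With $H'=\cj'=0$ and $\cj^2=H_0^2+2m/r_0^3$, (\ref{eq:r0-main1}) collapses to $\frac{6m}{r_0^2}\alpha=0$, so $m=0$, which disposes of constant radius. The branch $\cj'\dot\tau=0$ of (\ref{OS-m2}) makes $\cj\circ\tau$ constant, and (\ref{OS-m3}) then forces $\dot r=0$. The comoving branch gives $\cj'=-\frac32(\cj^2-H_0^2)$ and $\Delta=\dot\tau^2$. Two small points:
\begin{itemize}
\item Your expression for $\Delta^-$ has a sign slip. The correct one is $(1-r^2\cj^2)\dot\tau^2+2r\cj\dot\tau\dot r-\dot r^2$, and this is the expression that actually yields $\dot\tau^2$.
\item Your worry about mixing the two branches of (\ref{OS-m2}) is easily settled. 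The set $\{r\cj\dot\tau\neq\dot r\}$ is open, $\cj\circ\tau$ is locally constant on it, and so $\dot r\equiv0$ there. Since $\dot r$ has only isolated zeros, that set is empty.
\end{itemize}
The paper simply assumes $\Sigma$ non-null from the outset.

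\paragraph{The gap: null hypersurface of constant radius.} Here (\ref{del-null-alpha}) only shows that $H$ and $\cj$ are constant. Section \ref{subsec:null-matching} calls this ``trivial'' because it means the McVittie region is Schwarzschild--de Sitter, which is exactly your hypothesis. So nothing is ruled out. Concretely, take $r_0>2m$ with $1-2m/r_0-r_0^2H_0^2=0$, an SdS Killing horizon, which exists when $27m^2H_0^2<1$. Pair it with a de Sitter region with $\cj=1/r_0$. This satisfies everything you invoke, yet the FLRW region has $8\pi P_{RW}=-6m/r_0^3\neq0$.

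Worse, on such a hypersurface $\alpha=0$ and $\beta=\mp1$, so the rigging (\ref{eq:null-rigging}) vanishes identically. The paper's null junction conditions are therefore not even available there.

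\paragraph{How to close it.} Either restrict to non-null $\Sigma$ as the paper does, or compute directly:
\begin{itemize}
\item Use a transverse null rigging normalized by $\ell\cdot\vec e_1=-1$ on both sides.
\item Then $Y_{22}=r\,\ell^r$ gives $r_0/\dot t=r_0/\dot\tau$, so $\dot t=\dot\tau$. The choice $\cj=-1/r_0$ already fails at this step, since $\dot t$ and $\dot\tau$ have the same sign.
\item Write $\vec e_1=\dot t\,\partial_t$ with $\nabla_{\partial_t}\partial_t=\kappa\,\partial_t$ on the horizon. Then $Y_{11}=\ddot t/\dot t+\kappa\dot t$, and similarly on the FLRW side.
\item So $[Y_{11}]=0$ forces equal surface gravities, $\kappa_+=(r_0-3m)/r_0^2$ and $\kappa_-=1/r_0$. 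Equality gives $m=0$, which excludes this case.
\end{itemize}
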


\subsection{Matter terms and co-moving matching hypersurfaces.}\label{subsec:conseq-matter}

Next, we consider the consequences of matching for the continuity of the matter terms. Here, we restrict to the case where $\Sigma$ is timelike. Then we can take the normal to $\Sigma$ (\ref{norm-mcv}) as a rigging, and the rigged fundamental form is the second fundamental form (or extrinsic curvature) $K_{ab}$ of $\Sigma$. The consequences in question arise through the Gauss-Codazzi equations, which show that the quantities $G_{\alpha\beta}e^\alpha_an^\beta$  and $G_{\alpha\beta}n^\alpha n^\beta$ may be written in terms of $K_{ab}$ and its covariant derivatives (with respect to the Levi-Civita connection of $h_{ab}$) \cite{mars1993geometry}. The matching conditions (and underlying assumptions on the geometry of ${\cal{M}}$ and $\sig$) show that these are independent of the embedding and so 
\begin{eqnarray}
\left.G_{\alpha\beta}e^\alpha_an^\beta\right|_{\sigmcv} &=& \left. G_{\alpha\beta}e^\alpha_an^\beta\right|_{\sigrw}, \label{gtn-cts}\\
\left.G_{\alpha\beta}n^\alpha n^\beta\right|_{\sigmcv} &=& \left. G_{\alpha\beta}n^\alpha n^\beta\right|_{\sigrw}. \label{gnn-cts}
\end{eqnarray}

Applying the Einstein equations for a perfect fluid (which apply in both spacetime regions), we obtain
\begin{eqnarray} \left.(\mu+P)(u\cdot e_1)(u\cdot n)\right|_{\sigmcv} &=& \left.(\mu+P)(u\cdot e_1)(u\cdot n)\right|_{\sigrw}, \label{matter1}\\
\left.(\mu+P)(u\cdot n)^2 + P\right|_{\sigmcv} &=& \left.(\mu+P)(u\cdot n)^2 + P\right|_{\sigrw}, \label{matter2}
\end{eqnarray}

We emphasize that these are not new conditions that must be satisfied by the various terms on $\Sigma$: rather, they follow from the conditions previously obtained. Some no-go results emerge from these equations when we consider configurations where the matching hypersurface is \textit{co-moving} with the fluid of the McVittie region or of the FLRW region or of both. 
Two of the three cases in Proposition \ref{prop:co-moving} below immediately yield no-go results that rule out non-trivial matching. Here, non-trivial means that the McVittie spacetime is neither an FLRW spacetime ($m=0$), nor a Schwarzschild-de Sitter spacetime ($H=H_0=$ constant). The results rule out certain `natural' matching scenarios, where there is no leakage of the fluid from one region into the other. In the third case, a one-parameter family of configurations is found. 

The matching hypersurface $\Sigma$ is co-moving with the fluid if the fluid flow vector is tangent to $\Sigma$, so that 
\be \left.(u\cdot n)\right|_{\Sigma} = \beta\dot{t}-\gamma\dot{r} = 0. \label{eq:co-moving} \end{equation}
Notice that if this term vanishes on both sides of (\ref{matter2}), the pressure continuity condition familiar from the Oppenheimer-Snyder model arises. From (\ref{eq:co-moving}), we see that $\dot{r}$ can vanish only at isolated points, and since $\Delta>0$, the assumptions of Section 4.2.2 above apply, along with the relevant matching and junction conditions. The statements below arise from these. We won't consider these cases any further: the key point here is that assuming a co-moving condition on either or both sides of the matching hypersurface is extremely restrictive. 

\begin{proposition}\textbf{Co-moving matching hypersurfaces.}\label{prop:co-moving}
\begin{itemize}
    \item[(i)] If $\Sigma$ is co-moving in both the McVittie region and the FLRW region, then $m=0$ (so that the McVittie region is isotropic) and the two regions have the same Hubble function: $H(t)=\cj(\tau)$ and, without loss of generality, $t=\tau$.
    \item[(ii)] If $\Sigma$ is co-moving in the FLRW region but not the McVittie region, then $H=H_0=$ constant, and so the McVittie region is (a portion of) Schwarzschild-de Sitter spacetime with cosmological constant $\Lambda=3H_0^2$. For $H_0\neq0$, the Hubble function of the FLRW region is 
    \be J(\tau) = H_0\coth\left(\frac32 H_0\tau\right),\quad \tau\neq 0, \label{eq:J-co-moving}\end{equation}
    corresponding to an asymptotically de Sitter spacetime with $\Lambda=3H_0^2$ in the limits $\tau\to\pm\infty$. For $H_0=0$, we have 
    \be J(\tau)=\frac23\tau^{-1},\quad \tau>0. \label{eq:J-co-moving-0} \ee
    This situation corresponds to the Oppenheimer-Snyder model considered above. 
    \item[(iii)] If $\Sigma$ is co-moving in the McVittie region but not in the FLRW region, then the FLRW region is a portion of de Sitter spacetime with $\Lambda=3\cj_0^2$ where $\cj_0>0$, and the Hubble function of the McVittie region satisfies 
    \begin{eqnarray}
H' &=& \frac32(\cj_0^2-H^2)\left[1-(2m)^{2/3}(\cj_0^2-H^2)^{1/3}\right]^{1/2},\quad t\in\Phi(I). \label{Hub-eom}\end{eqnarray}
\end{itemize}
\hfill$\blacksquare$
\end{proposition}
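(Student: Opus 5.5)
The plan is to specialise the non-null, non-constant-radius matching conditions of Section \ref{subsub:non-null-r-noncon}, namely (\ref{eq:r-cont}), (\ref{eq:del-cont}), (\ref{jct22}), (\ref{eq:pi-cont}) and (\ref{eq:alpha-cts-again}), to each co-moving condition. I will use the following consequences of (\ref{eq:co-moving}).

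If $\Sigma$ is co-moving in the McVittie region, then $\dot r=\beta\dot t/\gamma$. With (\ref{al-be-ga-id}) this gives $\del^+=\dot t^2/\gamma$, $\Gamma^+=\dot t/\gamma$ and $\Pi^+=0$.

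If $\Sigma$ is co-moving in the FLRW region, then $\dot r=r\cj\dot\tau$. This gives $\del^-=\dot\tau^2$, $\Gamma^-=\dot\tau$ and $\Pi^-=0$.

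In either case $\del>0$, so $\Sigma$ is timelike. This lets me exclude open sets on which $\dot t=0$ (respectively $\dot\tau=0$), since these would force $\del=-\gamma\dot r^2\le 0$ (respectively $-\dot r^2\le0$). The same argument, together with the maximal rank condition, shows that $\dot r$ vanishes only at isolated points. Throughout, identities obtained on the dense set where the relevant factors are nonzero are extended by continuity.

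\emph{Case (i).} Equating $\del^\pm$ and $\Gamma^\pm$ gives $\dot t^2/\gamma=\dot\tau^2$ and $\dot t/\gamma=\dot\tau$. Dividing (with $\dot t\neq0$) gives $\gamma=1$, hence $m=0$, and then $\dot t=\dot\tau$, so $t=\tau$ after a translation. Equating the two expressions for $\dot r$ then gives $rH\dot t=r\cj\dot\tau$, hence $H=\cj$.

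\emph{Case (ii).} Since $\Pi^-=0$, (\ref{eq:pi-cont}) reads $H'\dot t(\beta\dot t-\gamma\dot r)=0$. Here $\dot t\neq0$, and the bracket is nonzero on a dense set because $\Sigma$ is not co-moving in the McVittie region. Hence $H'=0$ on a dense set, and so $H=H_0$ everywhere by continuity. Differentiate the mass relation (\ref{OS-m3}) along $\Sigma$, substitute $\dot r=r\cj\dot\tau$, and use $2m/r^3=\cj^2-H_0^2$. Cancelling $\cj\dot\tau$ gives $\cj'=-\tfrac32(\cj^2-H_0^2)$. This ODE integrates to (\ref{eq:J-co-moving}) after a translation of $\tau$, or to (\ref{eq:J-co-moving-0}) when $H_0=0$. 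The identification with the Oppenheimer--Snyder configuration is the preceding Proposition.

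\emph{Case (iii).} Symmetrically, $\Pi^+=0$ forces $\cj'\dot\tau(r\cj\dot\tau-\dot r)=0$, so $\cj=\cj_0$ is constant. By (\ref{rw-matter}) the FLRW region is then de Sitter with $\Lambda=3\cj_0^2$. The mass relation $2m/r^3=\cj_0^2-H^2$ forces $\cj_0^2>H^2$. Differentiating it and substituting $\dot r=rH\gamma^{-1/2}\dot t$ gives $3(\cj_0^2-H^2)Hf^{1/2}=2HH'$. Dividing by $H$, which I will need to be nonzero on a dense set, and writing $f=1-2m/r=1-(2m)^{2/3}(\cj_0^2-H^2)^{1/3}$ yields (\ref{Hub-eom}).

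I expect the main obstacle to be the sign $\cj_0>0$, together with the side claims in which a factor is only nonzero generically: $H\neq0$, $\cj\dot\tau\neq0$, and the non-co-moving bracket. For the sign I will use $\Gamma^+=\Gamma^-$, which here reads $\dot t/\gamma=(1-r^2\cj_0^2)\dot\tau+r\cj_0\dot r$, together with $\del^+=\del^-$ and the future orientation of $t$ and $\tau$. Using $\alpha^+=\alpha^-$, I will show that these force the sign of $\cj_0$ to agree with that of $H$, which is positive in the expanding McVittie region. If that fails, I will instead invoke the trapped/anti-trapped matching result of Section \ref{subsec:conseq-trapping}.
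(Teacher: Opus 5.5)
Your reduction is the one the paper intends. The paper gives no separate proof; it only remarks that the statements follow from the conditions of Section~\ref{subsub:non-null-r-noncon}, and part (ii) is the computation of Section~\ref{subsec:conseq-OS}. Your formulas $\Delta^+=\dot t^2/\gamma$, $\Gamma^+=\dot t/\gamma$, $\Delta^-=\dot\tau^2$ and $\Gamma^-=\dot\tau$ are correct. So are the deduction $\gamma=1$ in (i), the ODE $\mathcal{J}'=-\tfrac32(\mathcal{J}^2-H_0^2)$ in (ii), and the derivation of (\ref{Hub-eom}) in (iii).

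Two minor points:
\begin{itemize}
\item In (ii), it is $\mathcal{J}^2-H_0^2=2m/r^3>0$ that selects the $\coth$ branch over the $\tanh$ and constant solutions.
\item In (iii), $\dot r=rHf^{1/2}\dot t$ vanishes wherever $H$ does, and the rank condition does not exclude this. So $H\neq0$ off isolated points is the standing non-constant-radius hypothesis, not something you derive.
\end{itemize}

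The step that fails is your plan for $\mathcal{J}_0>0$. On the regular region $\alpha>0$, the conditions you list do not tie the sign of $\mathcal{J}_0$ to that of $H$. Substituting $\dot r=rHf^{1/2}\dot t$ into $[\Gamma]=0$ gives
\[ \dot\tau=\dot t\,\frac{f-r^2\mathcal{J}_0Hf^{1/2}}{\alpha}, \]
and $\Delta^-=\Delta^+=f\dot t^2$ then follows from $\Gamma^2=\alpha\Delta+\dot r^2$.

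Now take $H>0>\mathcal{J}_0$ and $\alpha>0$. The numerator is $f+r^2|\mathcal{J}_0|Hf^{1/2}>0$, so $\dot\tau$ has the sign of $\dot t$. The timelike tangent is then future-directed on both sides, and every matching condition holds. This is a genuine local matching to contracting de Sitter.

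The trapped/anti-trapped obstruction of Section~\ref{subsec:conseq-trapping} only excludes points with $\alpha<0$, so your fallback does not reach this case either. What is missing is a reason why $\Sigma$ must enter $\{\alpha<0\}$, and the local conditions do not supply one. For $\mathcal{J}_0<0$ the numerator tends to $2f>0$ as $\alpha\to0^+$. Hence $\dot\tau\to+\infty$, and $\Sigma$ leaves the FLRW chart ($\tau\to+\infty$) instead of crossing the horizon. This happens at finite $t$, since $\dot r>0$ and $H'>0$ by (\ref{Hub-eom}).

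So $\mathcal{J}_0>0$ needs an extra hypothesis, for example that $\Sigma$ extends into $\{\alpha<0\}$; otherwise it has to be read as a convention. The paper asserts the sign without argument, so the defect lies in the statement as much as in your proposal. Even so, the argument you sketch for it cannot be completed.
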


\section{Solving the matching conditions}\label{sec:solving}

Having identified the complete set of matching conditions in Section 4 above, we now turn to solving these. 

In solving the matching conditions, we are addressing the question of what Hubble functions will allow matching: which McVittie spacetimes can be matched to a (suitable) FLRW universe? 

To solve the matching conditions, we require a McVittie spacetime characterised by mass parameter $m$ and Hubble function $H:I_H\to\mathbb{R}:t\mapsto H(t)$, a Robertson-Walker universe characterised by the Hubble function $\cj:I_\cj\to\mathbb{R}:\tau\mapsto \cj(\tau)$, and a matching hypersurface $\Sigma$, characterised by functions $\rsig,\tsig$ and $\tasig$, such that the matching conditions described in Definition \ref{def:match} above are satisfied. We will refer to a collection
\be \{I, (r_\Sigma,t_\Sigma,\tau_\Sigma)\in C^1(I,\real^3), H\in C^1(I_H,\real), \cj\in C^1(I_\cj,\real)\} \label{config} \ee
that satisfies the matching conditions as a \textbf{matching configuration} for the problem. $I$ is an interval, and we can take $I_H$ and $I_J$ to be the images of $I$ under $\lambda\mapsto t(\lambda)$ and $\lambda\mapsto\tau(\lambda)$ respectively. The abstract matching hypersurface is $\sigma=I\times\mathbb{S}^2$.

We consider the three cases identified above separately, dealing first with the special cases of non-null constant radius hypersurfaces and null hypersurfaces. The fourth case of a null hypersurface of constant radius has been ruled out. 

\subsection{Null hypersurfaces of non-constant radius.}\label{subsec:sol-null-r-noncon}

In this case, the full set of matching conditions comprises 
\begin{equation}\del^+=\del^-=0,\quad r^+=r^-, \quad \alpha^+=\alpha^- \label{eq:match-null-non-const}\end{equation}
along with (\ref{eq:H11-null-final-pm}):
\be H'(t)f^{1/2}\dot{t}^2 = \cj'(\tau)\dot{\tau}^2.\label{sol-null-H11} \ee 
The null condition yields (\textit{cf.} (\ref{eq:null-tgt})) 
\be \dot{r} = (rHf^{1/2}\pm f)\dot{t}=(r\cj\pm1)\dot{\tau}. \label{eq:null-tgt-match} \ee

We consider the upper sign in this equation and assume an explanding McVittie region so that $H>0$. The alternatives yield similar conclusions. 

Since $\dot{r}$ is non-vanishing, we have $\dot{t}\neq0$ on $\Sigma$ and then the first equation of (\ref{eq:null-tgt-match}) provides the equation of motion of $\Sigma$ in the McVittie region in the form $dr/dt=rHf^{1/2}+f$.  

The condition $[\alpha]=0$ can be written $2m+r^3H^2=r^3\cj^2$. Differentiating with respect to $t$ and using (\ref{eq:null-tgt-match}) to eliminate $\dot{t}$ and $\dot{\tau}$ and then dividing by the non-zero $\dot{r}$ yields 
\be \frac{2rHH'}{rHf^{1/2}+f}+3H^2 = \frac{2r\cj\cj'}{r\cj+1}+3\cj^2. \label{eq:6-1-main1} \ee
We use (\ref{eq:null-tgt-match}) again to eliminate $\dot{t}$ and $\dot{\tau}$ from (\ref{sol-null-H11}), and use the result to eliminate $\cj'$ from (\ref{eq:6-1-main1}), and use $[\alpha]=0$ to eliminate $\cj$. This results in an ODE of the form $Q(r(t),H(t),H'(t))=0$. Along with the equation of motion given by the first of (\ref{eq:null-tgt-match}), this shows that the Hubble function of the McVittie region is strongly constrained: it is not possible to match a \textit{general} McVittie spacetime across a null hypersurface to an FLRW spacetime.

\subsection{Non-null hypersurfaces of constant radius.}\label{subsec:sol-null-r-con}
The full matching conditions in this case are  
\begin{equation}r^+=r^-=r_0>2m, \quad  \alpha^+=\alpha^-, \quad \dot{t}^+=\dot{t}^- \label{eq:non-null-r0-match-full}\end{equation}along with  (\ref{eq:r0-main1}). Combining this equation with (\ref{eq:r0-mass-match}) and (\ref{eq:r0-HJ-match}) yields an ODE for $\cj$: 
\begin{equation} \cj' =\frac{3}{2}(\cj^2-\frac{2m}{r_0^3})^{1/2}\left((1-\frac{2m}{r_0})^{1/2}\cj+\epsilon^+(\cj^2-\frac{2m}{r_0^3})^{1/2}\right),\label{eq:r0-ODE-J} \end{equation}
where $\epsilon^+=\sign{H}$. Alternatively, we can eliminate in favour of $H$ to obtain an equivalent equation:  
\begin{eqnarray} H' &=& -\frac{3}{2}\left(H^2+\frac{2m}{r_0^3}\right)^{1/2} \left((1-\frac{2m}{r_0})^{1/2}\left(H^2+\frac{2m}{r_0^3}\right)^{1/2}+\epsilon^- H\right)
\label{eq:r0-ODE-H}\end{eqnarray} 
where $\epsilon^-=\sign{\cj}$. 

The existence of this ODE shows the extremely restrictive nature of matching a McVittie spacetime with an FLRW universe across a constant radius hypersurface. The family of spatially flat McVittie metrics is generated by a parameter $m$ (the mass) and a $C^1$ function $H$ (the Hubble function) - as detailed in (\ref{mcv-lel})-(\ref{gam-def}) above. But (\ref{eq:r0-ODE-H}) shows that for a given $m$, there is just a one-parameter family of allowed Hubble functions.  Thus a generic McVittie spacetime cannot be matched to an FLRW spacetime across a constant radius hypersurface. 

The results of this section and the last one echo a result of \cite{nolan2014particle} showing that McVittie spacetimes admitting circular photon and particle orbits are likewise non-generic. We will not consider these special cases any further.

\subsection{Non-null hypersurfaces of non-constant radius.}\label{subsec:sol-r-noncon}

In this case, $\Sigma$ may change its causal character, but does so only at isolated points. This includes the important case of a timelike matching hypersurface in which the FLRW region acts as a spatially bound source for the exterior McVittie spacetime (or vice versa). 

The matching conditions are the preliminary matching conditions $r^+=r^-$ and \begin{equation}\del^+=\del^-, \qquad \del = \alpha\dot{t}^2+2\beta\dot{r}\dot{t}-\gamma\dot{r}^2,\label{eq:main-pre1} \end{equation}
along with the junction conditions of Section \ref{subsub:non-null-r-noncon}: 
\begin{equation}\Gamma^+=\Gamma^-, \qquad \Gamma = \alpha\dot{t}+\beta\dot{r},\label{eq:main-jct1} \end{equation}and 
\begin{equation}\Pi^+=\Pi^-, \qquad \Pi=\gamma^{-1/2}H'\dot{t}(\beta\dot{t}-\gamma\dot{r}).\label{eq:main-jct2} \end{equation}
The mass continuity equation 
\begin{equation}\alpha^+=\alpha^-, \qquad \alpha = 1-\frac{2m}{r}-r^2H^2 \label{eq:main-mass}\end{equation}
follows from the identity $\Gamma^2=\alpha\del+\dot{r}^2$ and the non-null condition $\Delta\neq0$. 

We consider these matching conditions from the following perspective: Given a McVittie spacetime characterized by the mass parameter $m>0$ and Hubble function $t\mapsto H(t)$, can we find an FLRW spacetime (characterized by Hubble function $\tau\mapsto J(\tau)$) and a matching hypersurface $\Sigma$ that fulfill the matching conditions of Definition \ref{def:match}? 
To answer this question in the affirmative, we must find 
the functions $r(\lambda),t(\lambda)$ characterizing $\Sigma$ as a hypersurface embedded in the McVittie spacetime, the Hubble function  $\cj(\tau)$ of the FLRW spacetime and the function $\tau(\lambda)$ that (with $r(\lambda)$) characterizes $\Sigma$ as a hypersurface embedded in the FLRW spacetime.

Evidently, there is a problem with this. We are seeking four functions $(r(\lambda),t(\lambda),\tau(\lambda))$ and $\cj(\tau)$, but we have only three equations: the first order ODEs (\ref{eq:main-pre1})-(\ref{eq:main-jct2}). We resolve this by exploiting the coordinate freedom on $\sigma$. The matching conditions are invariant under coordinate transformations (diffeomorphisms) on $\sigma$ of the form $\lambda\mapsto \bar{\lambda}(\lambda)$. We can exploit this to reduce the effective degrees of freedom in (\ref{eq:main-pre1})-(\ref{eq:main-jct2}). A convenient choice is to take 
\be \lambda = t, \label{gauge-t-lam} \ee
so that $\dot{t}\equiv 1$ on $\Sigma$. The tangential derivative $\partial_\lambda$ on $\Sigma$ is then $\partial_t$. We will assume this condition henceforth. We find that it allows us to prove the existence of semi-global solutions that satisfy this condition.

There are two cases to consider, depending on whether or not the matching is along a horizon - that is, a hypersurface foliated by marginally trapped surfaces. As seen in Section 4.2, this occurs if $\alpha=0$ along $\Sigma$. 

\subsubsection{Matching across a horizon: $\left.\alpha\right|_\Sigma=0$.}\label{subsubsec:hor}

We assume $\Delta\neq0$ and $\dot{r}\neq0$. Then as we have seen, the matching conditions imply that $\alpha^+=\alpha^-$. With $\alpha^+=\alpha^-=0$ and $\dot{r}^+=\dot{r}^-\neq 0$, the matching condition (\ref{eq:main-jct1}) yields 
\begin{equation}\beta^+ = \beta^-,\qquad \beta = rH\gamma^{1/2}. \label{eq:main-beta-cts} \end{equation}
With $\alpha=0$, (\ref{al-be-ga-id}) yields $\beta^2=1$, and so there are two subcases to consider. 

Consider first the case $\beta^+=\beta^-=1$. Using $\del^+=\del^-$, we can write 
\begin{equation}\dot{\tau} = 1+\left(\frac{1-\gamma}{2}\right)\dot{r}. \label{eq:al0bep1-tau}\end{equation}
From $\beta^-=1$, we have
\begin{equation}\cj = \frac{1}{r},\label{eq:al0bep1-J}\end{equation}
and taking the tangential derivative gives 
\begin{equation}\cj'\dot{\tau} = -\frac{\dot{r}}{r^2}.\label{eq:al0bep1-Jdot}\end{equation}
This equation and (\ref{eq:al0bep1-tau}) allow us to evaluate $\Pi^-$ in terms of $r$ and $\dot{r}$. Similarly, we can evaluate $\Pi^+$ in terms of $r$ and $\dot{r}$ by taking the tangential derivative of 
\begin{equation}H= \frac{1}{r\gamma^{1/2}}=\frac{1}{r}\left(1-\frac{2m}{r}\right)^{1/2},\label{eq:al0bep1-H}\end{equation}
which is equivalent to $\beta^+=1$. We find that $\Pi^+=\Pi^-$ is equivalent to 
\begin{equation}\dot{r}=\frac32\left(1-\frac{2m}{r}\right). \label{eq:al0bep1-rdot} \end{equation}
This has the implicit general solution 
\begin{equation}t = \frac23\left(r+2m\log\left(\frac{r}{2m}-1\right)\right) + c_1, \label{eq:eq:al0bep1-r-sol}\end{equation}
which can be (formally) inverted to give $r(t)$ in terms of the product log function and the constant of integration $c_1$. Then (\ref{eq:al0bep1-H}) gives the Hubble function  $H$ of the McVittie region as a function of the global time coordinate $t$ of the McVittie region. Combining (\ref{eq:al0bep1-rdot}) and (\ref{eq:al0bep1-tau}) and integrating yields 
\begin{equation}\tau=\frac23\left(r+3m\log\left(\frac{r}{2m}-1\right)\right) + c_2. \label{eq:eq:al0bep1-r-sol-tau}\end{equation}
Inverting and using (\ref{eq:al0bep1-J}) gives the Hubble function  $J$ of the FLRW region as a function of the global time coordinate $\tau$ of the FLRW region. 
We note that 
\begin{equation}\del = -\frac94\left(1-\frac{2m}{r}\right) \label{eq:al0bep1-del} \end{equation}
along the matching hypersurface, which is therefore timelike everywhere (see (\ref{eq:normsmcv})). Note that (\ref{eq:al0bep1-del}) and (\ref{eq:al0bep1-rdot}) show that the assumptions of this case are indeed satisfied by the solutions derived. 

As seen previously, an additional constraint on $\Sigma$ - in this case, that it is a horizon of the ambient spacetimes - imposes strong restrictions on those spacetimes. Given the mass parameter $m$, there is just a one parameter family of (expanding) McVittie spacetimes which allow the matching scenario of this subsection. 

In the case $\beta^\pm=-1$, we find corresponding restrictions, and the configurations that arise are precisely the time-reverse of those arising from $\beta^\pm=1$. 

\subsubsection{Matching across a non-horizon: $\left.\alpha\right|_\Sigma\neq0$.}\label{subsubsec:general}

We arrive finally at what should be considered the most general case of matching. The hypersurface is non-null except at isolated points, the radius $r$ is non-constant on $\Sigma$ (but with $\dot{r}$ possibly vanishing at isolated points) and $\alpha\neq 0$ on $\Sigma$ - again, except possibly at isolated points. 

Since neither $\del$ nor $\alpha$ vanish on $\Sigma$, the identity $\Gamma^2=\alpha\del+\dot{r}^2$ shows that (\ref{eq:main-pre1}) and (\ref{eq:main-jct1}) are equivalent to (\ref{eq:main-mass}) and (\ref{eq:main-jct1}). Thus we can state the following: 

\begin{proposition} The necessary and sufficient conditions for continuously matching a McVittie spacetime described by (\ref{mcv-lel})-(\ref{gam-def}) to an FLRW spacetime described by (\ref{rw-lel}) across a hypersurface $\Sigma$ which is non-null (except at isolated points), on which $\dot{r}$ is non-vanishing (except at isolated points) and which is not foliated by marginally trapped surfaces is that there exist an interval $I$, Hubble function s $H(t), \cj(\tau)$ and solutions $(r^+(\lambda)=r^-(\lambda),t(\lambda),\tau(\lambda))$ on $I$ of the equations (\ref{eq:main-jct1}), (\ref{eq:main-jct2}) and (\ref{eq:main-mass}), with the maximal rank conditions (\ref{max-rank}) satisfied almost everywhere on $I$ and with $\Delta, \alpha$ and $\dot{r}$ vanishing only at isolated points of $I$. 
\hfill$\blacksquare$
\end{proposition}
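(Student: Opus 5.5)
The plan is to assemble this from the reductions already made in Sections 3 and 4. We work with the general-hypersurface framework of Definition \ref{def:match}, using the rigging (\ref{eq:rig-def}).

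\textbf{Necessity.} Suppose a matching configuration exists with $\Sigma$ non-null off isolated points, $\dot{r}$ non-vanishing off isolated points, and $\alpha\neq0$ off isolated points. Conditions (i)--(ii) of Definition \ref{def:match} reduce to the preliminary conditions (\ref{eq:r-cont}) and (\ref{eq:del-cont}). Condition (i) also forces the maximal rank condition (\ref{max-rank}) almost everywhere. Condition (iii) holds automatically for (\ref{eq:rig-def}), by (\ref{eq:rig-norm}) and (\ref{eq:rig-cpt}). By Section \ref{subsub:non-null-r-noncon}, condition (iv) at non-null points with $\dot{r}\neq0$ is equivalent to (\ref{jct22}) and (\ref{eq:pi-cont}); these are (\ref{eq:main-jct1}) and (\ref{eq:main-jct2}). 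The identity (\ref{eq:Gam-id}) together with $\del\neq0$ then gives (\ref{eq:main-mass}), first at generic points and then at the isolated exceptional points by continuity, using the assumed $C^2$ embeddings and $C^1$ metric functions.

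\textbf{Sufficiency.} This is the converse step. Suppose $r,t,\tau,H,\cj$ satisfy (\ref{eq:main-jct1}), (\ref{eq:main-jct2}) and (\ref{eq:main-mass}). Rewrite (\ref{eq:Gam-id}) as $\alpha\del=\Gamma^2-\dot{r}^2$. Since $\Gamma$, $\dot{r}$ and $\alpha$ agree on both sides, $\alpha^+\del^+=\alpha^-\del^-$ with $\alpha^+=\alpha^-\neq0$ off isolated points. Dividing gives $\del^+=\del^-$ there, and by continuity everywhere.

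Now define the embeddings via (\ref{embedding}) and identify tangents as in (\ref{eq:tgt-cont}). The preliminary conditions, together with the rank condition almost everywhere, give (i) and (ii). Condition (iii) again holds for (\ref{eq:rig-def}). At points where $\del\neq0$ and $\dot{r}\neq0$, the computation leading to (\ref{eq:H11a}) shows that $[Y_{22}]=0$ and $[Y_{11}]=0$ are equivalent to $[\Gamma]=0$ and $[\Pi]=0$. The remaining components vanish or coincide by spherical symmetry. Hence (iv) holds on a dense open subset of $\Sigma$.

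\textbf{Main obstacle.} The hardest step is extending (iv) to the isolated points where $\del$, $\dot{r}$ or $\alpha$ vanish. There the rigging (\ref{eq:rig-def}) degenerates, since it becomes parallel to $\vec{e}_1$ when $\del=0$, and the formula (\ref{eq:H11a}) contains $1/\dot{r}$. The first approach is to note that $Y_{ab}$, built from $C^2$ embeddings and $C^1$ metrics, is continuous in $\lambda$. Equality on a dense set then propagates to the isolated points, in the same spirit as the cusp discussion around Figure 1.

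For the null points one must also check that transversality is recovered, or at least that the Mars--Senovilla conditions are unaffected on a set of isolated 2-spheres. If the continuity argument is not enough, a fallback is to switch locally near each such point to the null rigging (\ref{eq:null-rigging}). One would then invoke the rigging-independence of the junction conditions \cite{mars1993geometry} to conclude.
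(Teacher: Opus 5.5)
Your proposal is correct and follows essentially the paper's own argument. It reduces the junction conditions to $[\Gamma]=0$ and $[\Pi]=0$ via Section~\ref{subsub:non-null-r-noncon}, uses the identity $\Gamma^2=\alpha\Delta+\dot{r}^2$ in both directions ($\Delta\neq0$ gives $[\alpha]=0$, and $\alpha\neq0$ gives $[\Delta]=0$) to exchange $\Delta^+=\Delta^-$ for mass continuity, and extends everything to the isolated exceptional points by continuity. Your fallback at isolated null points (a locally chosen transverse null rigging, combined with the rigging-independence of the junction conditions) is a more careful version of the paper's brief continuity remark there, not a different route.
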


\begin{comments} This proposition shows that the matching problem (i.e.\ that of finding, for a given McVittie spacetime, a corresponding FLRW spacetime and matching hypersurfaces $\Sigma^\pm$ satisfying the conditions of Definition \ref{def:match}) can be cast in terms of a system of ODEs subject to an algebraic constraint. Considering the local and global existence and uniqueness properties of this system allows us to solve the matching problem. We establish the results below by imposing the gauge (\ref{gauge-t-lam}), and showing that this leads to semi-global results for the existence of $\cj$ and $\Sigma^\pm$. Global existence to the future is established, and we show that the matching must terminate at the past singularity of the McVittie region. 
\end{comments}

We set $\lambda=t$ and define
\begin{equation} J(t) = \cj(\tau(t)). \label{J-def} \end{equation}
Then
\begin{equation} \jd = \cj'\dot{\tau}. \label{eq:J-dot} \end{equation} 
We do not need to distinguish $r^+$ and $r^-$. Using the definitions above, the matching conditions (\ref{eq:main-jct1}), (\ref{eq:main-jct2}) and (\ref{eq:main-mass}) can be written as
\begin{equation} \alpha + \beta\dot{r} = (1-r^2J^2)\dot{\tau}+rJ\dot{r},\label{eq:match1}  
\end{equation} 
\begin{equation}
\gamma^{-1/2}H'(\beta-\gamma\dot{r}) = \dot{J}(rJ\dot{\tau}-\dot{r}),\label{eq:match2}
\end{equation} 
\begin{equation} 
\frac{2m}{r^3} + H^2 = J^2. \label{eq:match3} 
\end{equation} 
(Here and in the remainder of the paper, $\alpha,\beta$ and $\gamma$ will refer to the McVittie forms of (\ref{alpha-def})-(\ref{gam-def}).)

Evidently, we could solve (\ref{eq:match3}) for $r$ and eliminate this variable from the system. However, it is more convenient to proceed as follows. Taking a tangential derivative, we see that (\ref{eq:match3}) is equivalent to the initial value problem 
\begin{equation} \frac{3m}{r^4}\dot{r} + J\dot{J} = HH', \label{eq:match4a} \end{equation} 
\begin{equation} \frac{2m}{r_0^3}+H_0^2=J_0^2,
\label{eq:match4b} 
\end{equation}
where $H_0, J_0$ and $r_0$ are initial values of $H,J$ and $r$ respectively at some initial time $t_0$. $H_0=H(t_0)$ is determined by the given McVittie spacetime; $r_0$ and $J_0$ are then specified subject to (\ref{eq:match4b}). 

The matching conditions then become a system of ODEs for the triple $(r,\tau,J)$ defined on an interval $I$ with $t_0\in I$, along with the initial condition (\ref{eq:match4b}). The Robertson-Walker spacetime is constructed by determining the Hubble function
\begin{equation} \cj(\tau) = J(t(\tau)) \label{eq:J-tau-sol} \end{equation}
where $\tau\mapsto t(\tau)$ is the inverse of $t\mapsto \tau(t)$ on the range of $I$ under $\tau$. We note that the matching condition (\ref{eq:match3}) - or equivalently the IVP (\ref{eq:match4a}), (\ref{eq:match4b}) - is the equation $\alpha=1-r^2J^2$. We can therefore use these terms interchangeably throughout the analysis of the system of ODEs: any such interchange yields an equivalent system. In particular, and recalling that $\alpha$ is assumed non-zero, we can write (\ref{eq:match1}) as 
\be \dot{\tau} = \left(\frac{\beta-rJ}{1-r^2J^2}\right)\dot{r}+1.\label{eq:tau-dot0} \end{equation}
 We use this form to eliminate $\dot{\tau}$ from (\ref{eq:match2}), and we can then combine the resulting equation with (\ref{eq:match4a}) to solve for $\dot{r}$ and $\dot{J}$. Two solutions emerge, the first of which is $\dot{r}=0, J\dot{J}=HH'$. This yields $m=0$, and so a trivial configuration. The second solution yields (where we also substitute into (\ref{eq:tau-dot0}) to obtain a diagonalized system)
\begin{eqnarray}
\dot{\tau}&=& \left(\frac{1-r^2J^2}{1-rJ\beta}\right)+\frac{r^3}{3m}\frac{(\beta-rJ)^2}{(1-rJ\beta)(1-r^2J^2)}\gamma^{-1/2}H',\label{eq:tau-dot1}\\
    \dot{r}&=&rJ\left(\frac{1-r^2J^2}{1-rJ\beta}\right)+\frac{r^3}{3m}\left(\frac{\beta-rJ}{1-rJ\beta}\right)\gamma^{-1/2}H',\label{eq:rdot-f} \\
    \dot{J}&=&\left(\frac{1-r^2J^2}{1-rJ\beta}\right)\left(-\frac{3m}{r^3}+\gamma^{1/2}H'\right).\label{eq:jdot-final}
\end{eqnarray}

With the matching conditions reduced to this system of ODEs and the initial condition (\ref{eq:match4b}), we find that we can indeed solve the matching problem (i.e.\ given the McVittie spacetime, we can find an FLRW spacetime and embedded hypersurfaces $\Sigma^\pm$ that satisfy the matching conditions of Definition \ref{def:match}). This is the main result of this paper, and is encapsulated in Proposition 6.2 below. 

We need to establish some properties of the coefficients of (\ref{eq:tau-dot1})-(\ref{eq:jdot-final}) which we do here before stating and proving the main result. 

\begin{lemma}\label{lemma:bounds}
    Let 
    \be a= \frac{1-r^2J^2}{1-rJ\beta},\quad b =\frac{\beta-rJ}{1-rJ\beta},\quad c = \frac{\beta-rJ}{1-r^2J^2} \label{abc-def} \ee
    and define 
    \be x = rJ,\quad y=rH>0,\quad f=\gamma^{-1}=1-\frac{2m}{r}\in(0,1).  \label{xy-def} \ee 
    Assume that the matching condition (\ref{eq:match3}) holds, so that 
    \be x^2=y^2 + 1-f. \label{x-y-f-id} \ee
    \begin{itemize}
        \item[(i)] If $x>0$, then $\alpha=0$ iff $(x,y)=(1,f^{1/2})$. If $x\neq 1$, we can write 
        \be a = \frac{1+xyf^{-1/2}}{1+x^2f^{-1}},\quad b= \frac{yf^{1/2}-x}{1+y^2},\quad c = \frac{1-f^{-1}}{yf^{-1/2}+x}. \label{abc-pos} \ee
        The following bounds hold, along with the limiting values at $(x,y)=(1,f^{1/2})$ indicated: 
    \be f<a<f^{1/2},\quad \left.a\right|_{\alpha=0}= \frac{2f}{f+1}, \label{a-props-pos} \ee
    \be \frac{f^{1/2}-1}{y} < b <\frac{f^{1/2}-1}{x},\quad \left.b\right|_{\alpha=0} = \frac{f-1}{f+1}, \label{b-props-pos} \ee
    and 
    \be \frac{1-f^{-1/2}}{y} < c <\frac{1-f^{-1/2}}{x},\quad \left.c\right|_{\alpha=0} = \frac{f-1}{2f}. \label{c-props-pos} \ee
    In particular, if $x>0$ then $a>0, b<0$ and $c<0$.
    \item[(ii)] If $x<0$, then $\alpha=0$ iff $(x,y)=(-1,f^{1/2})$. If $x\neq-1$, we can write 
    \be a=\frac{1-x^2}{1-xyf^{-1/2}},\quad b = \frac{yf^{-1/2}-x}{1-xyf^{-1/2}},\quad c = \frac{yf^{-1/2}-x}{1-x^2}. \label{abc-neg} \ee
    The following bounds and limiting values hold:
    \be -f^{1/2} < a < f^{1/2},\quad \left.a\right|_{\alpha=0} = 0 \label{a-props-neg} \ee
    and 
    \be \frac{1-f^{1/2}}{(-x)}<b<(-x)(1+f^{-1/2}),\quad \left.b\right|_{\alpha=0}=1,\label{b-props-neg} \ee
    In particular, if $x<0$ then $b>0$ and $a$ and $c$ have the same sign as $\alpha$, with 
    \be \lim_{\alpha\to 0^\pm} c = \pm \infty. \label{c-lim-neg} 
    \ee
    
    \end{itemize}
\end{lemma}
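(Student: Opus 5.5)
The proof is elementary algebra on the three variables $x=rJ$, $y=rH>0$, $f\in(0,1)$, linked by the constraint (\ref{x-y-f-id}). First rewrite the metric functions in these variables: $\beta = y f^{-1/2}$, $\alpha = 1-x^2$ (by (\ref{eq:match3}), $\alpha=1-r^2J^2$), and $1-rJ\beta = 1-xyf^{-1/2}$.

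\textbf{The horizon $\alpha=0$.} Here $x^2=1$, so (\ref{x-y-f-id}) gives $y^2=f$, i.e.\ $y=f^{1/2}$. This yields $(x,y)=(\pm1,f^{1/2})$ according to the sign of $x$. Conversely, each of these points satisfies the constraint and has $\alpha=0$.

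\textbf{Case (ii), $x<0$.} The formulas (\ref{abc-neg}) are just the definitions (\ref{abc-def}) rewritten. Since $-xy>0$, the denominator $1-xyf^{-1/2}$ exceeds $1$, and the numerator $yf^{-1/2}-x$ of $b$ and $c$ is positive. Consequently:
\begin{itemize}
\item $b>0$;
\item $a$ and $c$ carry the sign of $1-x^2=\alpha$;
\item $c\to\pm\infty$ as $\alpha\to0^\pm$, because its numerator tends to $2$ while its denominator tends to $0^\pm$.
\end{itemize}
The limits at $(-1,f^{1/2})$ follow by substitution: $a\to0$ and $b\to 2/2=1$.

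For the two-sided bounds, use $x^2-1=y^2-f$ to rewrite quantities. In particular
\[
1-x^2=(f^{1/2}-y)(f^{1/2}+y),
\]
which lets us compare $|a|$ with $f^{1/2}$ through a quadratic inequality in $y$ (with $-x=\sqrt{y^2+1-f}$). Each bound reduces to showing a polynomial expression in $y$ and $f$ is positive after squaring, and this is done by direct expansion.

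\textbf{Case (i), $x>0$ with $x\neq1$.} The representations (\ref{abc-pos}) require genuine manipulation: the natural denominator $1-xyf^{-1/2}$ can vanish, and the point is that this happens exactly at $\alpha=0$. Rationalise by multiplying numerator and denominator by $1+xyf^{-1/2}$, then use
\[
x^2y^2f^{-1}-1=(x^2-1)(y^2+f)f^{-1}+\dots
\]
In more detail, the constraint gives
\[
f-x^2y^2 = f - y^2(y^2+1-f) = (f-y^2)(1+y^2),
\]
while $1-x^2 = f-y^2$. The common factor $f-y^2$, which is proportional to $\alpha$, therefore cancels. This gives $b=(yf^{1/2}-x)/(1+y^2)$, and similarly for $a$ and $c$. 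Being continuous at $\alpha=0$, these forms yield the limiting values by substituting $x=1$, $y=f^{1/2}$.

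\textbf{Bounds in case (i).} Using $x>y$ (from $x^2=y^2+1-f>y^2$) together with $x<y+f^{-1/2}$-type estimates, each inequality reduces, after clearing the positive denominators and applying (\ref{x-y-f-id}), to one of the following:
\begin{itemize}
\item $(x-yf^{1/2})^2>0$;
\item $(1-f^{1/2})(\cdots)>0$.
\end{itemize}
For instance, the bound $a>f$ is equivalent to $1+xyf^{-1/2}>f+x^2$, i.e.\ $(1-f)-x^2+xyf^{-1/2}>0$, which becomes $xy(f^{-1/2})-y^2>0$, i.e.\ $x>yf^{1/2}$, which holds since $x>y$.

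\textbf{Main obstacle.} The hardest part is the two-sided bounds for $b$ and $c$ in case (i) (and for $b$ in case (ii)). There one must pick the right factorisation so that the squared inequalities collapse, and must track strictness at the horizon point. For that point I would first check that the inequalities hold there, and then argue that they are strict elsewhere.
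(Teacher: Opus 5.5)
Your proposal is correct and follows the paper's route. You rewrite everything in $(x,y,f)$, rationalise with $1+xyf^{-1/2}$, use the constraint to factor the denominator as $(1-x^2)(1+x^2f^{-1})=f^{-1}(f-y^2)(1+y^2)$, cancel the common factor $\alpha=1-x^2=f-y^2$, and then read off the limiting values and bounds by elementary algebra. For the bounds, note that after clearing the positive denominators and using $x^2-y^2=1-f$ (in the form $|x|-y=(1-f)/(|x|+y)$), every inequality collapses to $|x|>y$, $x>yf^{1/2}$ or $f<f^{1/2}$. So no squaring, and no separate strictness check at the horizon point, is actually needed.
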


\noindent\textbf{Proof:} These rely on relatively straightforward algebraic manipulations. To obtain the first of (\ref{abc-pos}) and the bounds of (\ref{a-props-pos}), we use 
\be a = \frac{1-x^2}{1-xyf^{-1/2}} = \frac{(1-x^2)(1+xyf^{-1/2})}{1-x^2y^2f^{-1}}=\frac{(1-x^2)(1+xyf^{-1/2})}{(1-x^2)(1+x^2f^{-1})}.\ee
To obtain the second equality, we multiply by 1 (in a form non-zero/non-zero) and to obtain the third, we use (\ref{x-y-f-id}). For $x^2\neq1$, the obvious cancellation yields a form of $a$ which allows us to establish the bounds in a straightforward way. Taking the limit $(x,y)\to(1,f^{1/2})$ gives the value of $a$ at $\alpha=0$ when $x>0$. The other statements are proven in a similar way. \hfill$\blacksquare$

We can now state and prove our main results. We separate these into statements about the system of equations (\ref{eq:tau-dot1})-(\ref{eq:jdot-final}), and consequent statements about the spacetime matching problem.

\begin{theorem}\textbf{Existence, uniqueness and properties of matching configurations.}\label{Thm:Match-V4} 
Let $m>0$ and let $H\in C^2((0,+\infty),\mathbb{R})$ with the following properties:
\be H(t)>0,\quad H'(t)<0\quad \hbox{for all } t\in(0,+\infty), \label{H-exp-wec} \ee
\be \lim_{t\to 0^+} H(t) = +\infty, \label{H-bb} \ee
and 
\be \lim_{t\to +\infty} \int_{t_0}^t H(s) ds = +\infty. \label{H-scale-inf} \ee
The limit $H_\infty=\lim_{t\to+\infty}$ exists and is non-negative. In the case where $H_\infty=0$, assume that 
\be H' \sim - \kappa H^2,\quad t\to+\infty \label{dH-lim-zero} \ee
for some $\kappa>0$. Let $t_0>0$, $H_0=H(t_0)$, let $J_0\in\real$ and $r_0$ satisfy (\ref{eq:match4b}) with $r_0>2m$, and let $\tau_0\in\real$. 

\begin{itemize}
    \item[(i)] Let $J_0>0$. Then there exists $t_1\in(0,t_0)$  and a unique solution on $I=(t_1,+\infty)$ (the maximal interval of existence) of the initial value problem comprising the ODEs (\ref{eq:match1}), (\ref{eq:match2}) and (\ref{eq:match4a}) and the initial conditions
\be J(t_0) = J_0,\quad r(t_0)=r_0,\quad \tau(t_0)=\tau_0. \label{eq:ICS} \end{equation}
$J>0$ throughout $I$, and the following limits hold: 
\be
\lim_{t\to t_1^+} r(t) = 2m \label{r-to-2m} 
\ee
and
\be
\lim_{t\to+\infty} r(t) = +\infty. \label{r-at-infinity} 
\ee
These conclusions hold for any $\alpha_0\neq0$. If $\alpha_0=0$, the conclusions also hold provided \be H(t_0)H'(t_0)+\frac{3f(r_0)}{2r_0^2}\left(1-\frac{3m}{r_0}\right) \neq 0.\label{H-not-special} \ee

\item[(ii)] Let $J_0>0$ and let $\tau:(t_1,+\infty)$ be the solution for $\tau$ of (\ref{eq:match1}), (\ref{eq:match2}) and (\ref{eq:match4a}) obtained in part (i). Then there exists $t_2\in(t_1,t_0)$ such that $\dot{\tau}(t)$ is negative for all $t\in(t_1,t_2)$. Further, $\dot{\tau}$ is positive for all sufficiently large $t$. $\dot{\tau}$ remains finite throughout $(t_1,+\infty)$.   

\item[(iii)] Let $J_0<0$. If $\alpha(t_0)>0$, then there exists $t_1<t_0$ such that $\alpha(t_1)<0$ for $t<t_1$. Then matching is ruled out by the observations of Section 5.2, and this conclusion also holds if $\alpha(t_0)<0$.   

\end{itemize}
\end{theorem}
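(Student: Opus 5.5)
We work with the diagonalized system (\ref{eq:tau-dot1})--(\ref{eq:jdot-final}), written in terms of the coefficients of Lemma \ref{lemma:bounds}:
\[ \dot r = rJa + \frac{r^3}{3m}b\gamma^{-1/2}H',\qquad \dot J = a\left(-\frac{3m}{r^3}+\gamma^{1/2}H'\right),\qquad \dot\tau = a+\frac{r^3}{3m}bc\,\gamma^{-1/2}H'. \]
The $\tau$ equation decouples, so the core of the argument is the planar system for $(r,J)$, subject to the constraint (\ref{eq:match3}). That constraint is preserved because (\ref{eq:match4a}) is built into the system.

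\textbf{Local existence and uniqueness.} For $J_0>0$ and $\alpha_0\neq0$, the right-hand sides are $C^1$ near the initial data. Here $H\in C^2$, $r_0>2m$, and the denominator $1-rJ\beta$ is nonzero, since Lemma \ref{lemma:bounds}(i) gives the bounded, nonvanishing forms (\ref{abc-pos}). Picard--Lindel\"of therefore gives a unique local solution. When $\alpha_0=0$ (so $x=1$, $y=f^{1/2}$), the expressions in (\ref{abc-pos}) are $0/0$ type. I would verify that $a,b,c$ extend as $C^1$ functions of $(r,J,t)$ through $\alpha=0$ along the constraint surface. The nondegeneracy condition (\ref{H-not-special}) should be exactly the statement that the trajectory crosses $\{\alpha=0\}$ transversally rather than lying in it. To check this, compute $\dot\alpha$ at $\alpha=0$ using $a=2f/(f+1)$ and $b=(f-1)/(f+1)$, and confirm that it equals a nonzero multiple of the left side of (\ref{H-not-special}). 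This also explains why the zeros of $\alpha$ are isolated.

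\textbf{Sign of $J$ and the behaviour of $r$.} Since $a>0$ when $x>0$, we get $\dot J<0$: $H'<0$ and $-3m/r^3<0$. If $J\to0^+$ were possible, then (\ref{eq:match3}) would force $2m/r^3+H^2\to0$, which is impossible at finite $t$. So $J$ stays positive.

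For the forward direction, note that $b<0$ and $H'<0$ make both terms of $\dot r$ nonnegative, so $\dot r>0$ and $r$ increases. The bounds (\ref{a-props-pos}) and (\ref{b-props-pos}) give $\dot r\lesssim rJ+C r^2|H'|$. This rules out finite-time blow-up once combined with $J\ge H$ and $J^2-H^2=2m/r^3$. For $r\to\infty$, I would compare with a comoving-type growth $\dot r\ge f\, rJ\ge f\, rH$. Then (\ref{H-scale-inf}) gives $\log r\to\infty$, and the asymptotic assumption (\ref{dH-lim-zero}) controls the $H_\infty=0$ case, where the $H'$ term matters.

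For the backward direction, $r$ decreases as $t$ decreases. Near $r=2m$ we have $f\to0$, and $H\to+\infty$ as $t\to0^+$. I would show that $\dot r$ stays bounded below by a positive quantity, for instance $rJa\ge rJf$ together with $J\ge H$ large. Then $r$ reaches $2m$ at some finite $t_1>0$ before $t=0$. Maximality follows because $\gamma^{1/2}\to\infty$ and the coefficients leave the domain there.

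\textbf{Sign of $\dot\tau$, finiteness, and the case $J_0<0$.} For part (ii), $\dot\tau=a+\frac{r^3}{3m}bc\gamma^{-1/2}H'$ with $bc>0$ and $H'<0$, so the second term is negative. As $r\to2m$ we have $f\to0$ and $a<f^{1/2}\to0$, while $bc\,\gamma^{-1/2}|H'|$ behaves like $f^{1/2}|H'|\cdot O(1/(xy))$. I need to show the latter dominates $a$ near $t_1$, which gives $\dot\tau<0$ on some interval $(t_1,t_2)$. This requires tracking how $H(t_1)$ and $H'(t_1)$ compare with $f^{1/2}$, and I expect this asymptotic comparison to be the main obstacle.

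At late times, $a\to$ a positive limit because $f\to1$, and $b,c=O(1/r)$. The negative term is therefore $O(r|H'|/y)$. Using (\ref{dH-lim-zero}), or $H'\to0$ when $H_\infty>0$, this term stays below $a$, so $\dot\tau>0$ eventually. Finiteness of $\dot\tau$ throughout follows from the bounds in Lemma \ref{lemma:bounds}(i), including at $\alpha=0$ where $c=(f-1)/(2f)$ is finite.

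For part (iii), with $x<0$ Lemma \ref{lemma:bounds}(ii) gives $b>0$. Integrating backwards, $r$ is driven towards $2m$ and $y$ grows, so $\alpha$ passes through zero at some $t_1$ and is negative before it. For $t<t_1$ the McVittie side is then anti-trapped, since $H>0$, while $J<0$ makes the FLRW side trapped. This contradicts the trapped/anti-trapped compatibility discussed in Section 5.2, and the same contradiction applies immediately if $\alpha(t_0)<0$.
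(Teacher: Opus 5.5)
Your outline follows the paper's route: the diagonalised system, propagation of (\ref{eq:match3}), $\dot r>0$ and $\dot J<0$, the bound $\dot r\ge(r-2m)H$ for $r\to\infty$, and the trapped/anti-trapped obstruction in (iii). However, two central claims are not established.

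\textbf{Past evolution in (i).} Your lower bound $\dot r\ge rJa\ge rJf=(r-2m)J\ge(r-2m)H$ degenerates linearly in $r-2m$. It integrates only to $r(t)-2m\le(r_0-2m)\exp\left(-\int_t^{t_0}H\,ds\right)$, which is positive for every $t>0$, so it can never produce a $t_1>0$. If $\int_0^{t_0}H\,ds<\infty$ (e.g.\ $H\sim t^{-1/2}$ as $t\to0^+$), it does not even force $r\to2m$ as $t\to0^+$. Nor is $\dot r$ bounded below by a positive constant: in fact $\dot r\to0$ as $t\to t_1^+$. What drives $r$ to $2m$ is that $\dot r$ vanishes only like $f^{1/2}$. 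The paper captures this through the $H'$ term you discard. Using $\mathcal{B}<(f^{1/2}-1)/x$ and $v=1-f^{1/2}$ (so $\dot v=-\frac{m}{r^2f^{1/2}}\dot r$), one gets $\dot v<\frac{H'}{3J}v$. Since $J^2<H^2+k^2$ with $k=1/(2m)$, integrating backwards gives $v(t)>v(t_0)\left(\frac{H+\sqrt{H^2+k^2}}{H_0+\sqrt{H_0^2+k^2}}\right)^{1/3}$, which diverges as $t\to0^+$ by (\ref{H-bb}). As $v<1$, $r$ must reach $2m$ at some $t_1\in(0,t_0)$. Similarly, your forward bound $\dot r\lesssim rJ+Cr^2|H'|$ is of Riccati type and does not exclude blow-up. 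You need $|b|<(1-f^{1/2})/y$ together with $rf^{1/2}(1-f^{1/2})<m$ to get the linear bound $\dot r<r(J_0-H'/(3H))$.

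\textbf{Early-time sign of $\dot\tau$ in (ii).} Your estimate of the negative term is off by a factor $f^{-1/2}$. By (\ref{c-props-pos}), $c$ diverges like $-f^{-1/2}/y$; exactly, $f^{1/2}c=(f-1)/(y+xf^{1/2})\to-1/(2mH_1)$, while $b\to-1/(2mJ_1)$. Hence $\frac{r^3}{3m}bc\gamma^{-1/2}H'\to\frac{2H'(t_1)}{3H_1J_1}<0$, whereas $0<a<f^{1/2}\to0$, and $\dot\tau<0$ near $t_1$ is immediate. With your estimate both terms would be $O(f^{1/2})$, and the sign would hinge on comparing $|H'(t_1)|$ with $H_1^2$, which the hypotheses do not control. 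At late times the usable bound is likewise $|b|,|f^{1/2}c|<(1-f^{1/2})/y$. This gives a negative term $\lesssim\frac{m}{3r}|H'|/H^2$, which is where (\ref{dH-lim-zero}) enters; your claim $b,c=O(1/r)$ fails when $rH\to0$.

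\textbf{Smaller points.} (\ref{abc-pos}) is not of $0/0$ type at $\alpha=0$: it is the regular form, and it is (\ref{abc-def}) that degenerates. Taking (\ref{abc-pos}) as the vector field off the constraint is exactly how the paper gets $C^1$ coefficients when $\alpha_0=0$. Uniqueness for the original, degenerate system then needs (\ref{H-not-special}) to show that \emph{any} solution has $rJ\neq1$ on a punctured neighbourhood of $t_0$. In (iii), for $x<0$ and $\alpha>0$ both terms of $\dot r$ are negative, so $r$ increases as $t$ decreases; it is not driven to $2m$. That is what keeps the backward solution alive. Then $\alpha>0$ forces $rH<1$, i.e.\ $H<1/(2m)$, which is impossible near $t=0$.
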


\begin{comments}
\begin{itemize} 
\item[(i)] This proposition validates the gauge choice (\ref{gauge-t-lam}). It shows that with this choice, we can always solve the matching problem - at least locally.
\item[(ii)] The condition (\ref{H-not-special}) must be imposed in order to avoid the solution of Section 6.3.1, for which the term on the left equals zero. As we will see, the condition allows us to determine uniqueness in the case $\alpha_0=0$.
\item[(iii)] The conditions on the Hubble function of the McVittie spacetime characterize an expanding McVittie spacetime, in which the \textit{background} FLRW spacetime (i.e.\ the spacetime obtained by setting $m=0$ in (\ref{mcv-lel}), which we emphasize is not the FLRW spacetime of the matching) satisfies the weak energy condition. There is a big bang singularity at a finite time in the past, which we set to be at $t=0$ by a translation, and the scale factor (the exponential of the integral of $H$) becomes infinite as $t\to+\infty$. These conditions apply to the $\Lambda$CDM model, and to models with a linear barotropic equation of state. See e.g.\ the comments in Section 1.1 of \cite{nolan2025can}. 
\item[(iv)] To understand (\ref{dH-lim-zero}), we note the following. The density $\mu_0$ and pressure $P_0$ of the FLRW background of the McVittie region (i.e.\ the $m=0$ limit of (\ref{mcv-matter})) are given by 
\be 8\pi\mu_0 = 3H^2-\Lambda,\quad 8\pi P_0 = -2H'-3H^2+\Lambda. \label{eq:FLRW}\ee
Suppose that there is a barotropic equation of state of the form 
\be P_0=g(\mu_0),\quad g:\real_+\to\real\label{eq:eos} \ee where $g$ is differentiable at the origin. We introduce the parameter 
\be \kappa = \frac32(1+g'(0))>0, \label{eq:kappa-def}\ee 
which relates to the sound speed at zero density. The sign follows from the weak energy condition. Then in the case where $H_\infty=\lim_{t\to +\infty} H(t) = 0$, we can prove (\ref{dH-lim-zero}). See Lemma 3.3 of \cite{nolan2025can}. This condition is used in the proof of part (ii) of the proposition. 
\end{itemize} 
\end{comments}

\noindent\textbf{Proof of Theorem 6.1 (i):} To prove part (i) of the proposition, we establish first local existence, and then show that the maximal interval of existence has the form claimed. We do this by replacing the system of equations (\ref{eq:tau-dot1})-(\ref{eq:jdot-final}) with an auxiliary system, motivated by Lemma \ref{lemma:bounds}. As seen, the system (\ref{eq:tau-dot1})-(\ref{eq:jdot-final}) follows from the system of interest (\ref{eq:match1}), (\ref{eq:match2}) and (\ref{eq:match4a}) in the case when $\alpha\neq 0$. We work with this auxiliary system, and then show that solutions of this system correspond to solutions of the original system. So with $x,y$ as in Lemma \ref{lemma:bounds} (but \textit{not} assumed to satisfy the matching condition (\ref{x-y-f-id})), define 
\be \mathcal{A} = \frac{1+xyf^{-1/2}}{1+x^2f^{-1}},\quad \mathcal{B}= \frac{yf^{1/2}-x}{1+y^2},\quad \mathcal{C} = \frac{1-f^{-1}}{yf^{-1/2}+x}, \label{abc-bar} \ee
and consider the initial value problem consisting of the ODEs
\begin{eqnarray}
    \dot{\tau} &=& \mathcal{A}+\frac{r^3}{3m}\mathcal{B}{\mathcal{C}}\gamma^{-1/2}H',\label{dtau-bar} \\
    \dot{r} &=& \mathcal{A}rJ +\frac{r^3}{3m}\mathcal{B}\gamma^{-1/2}H',\label{dr-bar} \\
    \dot{J} &=&\mathcal{A}\left(-\frac{3m}{r^3}+\gamma^{1/2}H'\right) \label{dJ-bar} 
\end{eqnarray}
with initial values (\ref{eq:ICS}) satisfying (\ref{eq:match4b}) and $J_0>0$. Since $x_0=r_0J_0>0$, $y_0=r_0H_0>0$ and $f\in(0,1)$, it is clear that the right hand sides of (\ref{dtau-bar})-(\ref{dJ-bar}) are $C^1$ on a neighbourhood of $(\tau_0,r_0,J_0)$, and so local existence and uniqueness of solutions is immediate. We note that this holds even if $\alpha_0=0$.  Let $I$ be the corresponding maximal interval of existence. 

A direct calculation shows that (\ref{eq:match4a}) holds on $I$, and so with the initial condition (\ref{eq:match4b}), we see that (\ref{eq:match3}) holds throughout $I$.  It follows that $J$ is non-vanishing, and hence positive throughout $I$. With (\ref{eq:match3}) in hand, we see that the coefficients $\mathcal{A},\mathcal{B}$ and $\mathcal{C}$ satisfy the conclusions for $a,b,c$ of part (i) of Lemma \ref{lemma:bounds}. Then using $H'<0$ and this lemma, (\ref{dr-bar}) and (\ref{dJ-bar}) show that $\dot{r}>0$ and $\dot{J}<0$ on $I$. 

We consider next the future evolution. So let $t_0<t\in I$. Using the bounds (\ref{a-props-pos}) and (\ref{b-props-pos}), we can show 
\be \dot{r} < rf^{1/2}J-\frac{r^2}{3m}f^{1/2}(1-f^{1/2})\frac{H'}{H}. 
\label{rdot-bound1} 
\ee
Recall that $H'<0$. Define 
\be u(r) = rf^{1/2}(1-f^{1/2}). \label{u-def} \ee
Then $u$ is monotone increasing on $(2m,+\infty)$, and $\lim_{r\to+\infty} u(r) = m$. Thus $u(r_0)<u(r(t))<m$ for all $t>t_0$. Using this upper bound and the fact that $J$ is decreasing, we obtain 
\be \dot{r} < r(J_0 - \frac{1}{3}\frac{H'}{H}). \label{rdot-bound2} \ee
Integrating yields 
\be r(t) < r(t_0)\left(\frac{H_0}{H(t)}\right)^{1/3}e^{J_0(t-t_0)},\quad t>t_0. \label{eq:r-bound} \ee
Since $J$ is decreasing, we have $J_0>J(t)$ for $t>t_0$. With $J^2=H^2+2m/r^3$, we have 
\be J > H >H_{\infty} \label{J-lower} \ee
where 
\be H_\infty = \lim_{t\to\infty} H \geq0.\label{H-inf} \ee
This limit must exist and be non-negative since $H$ is positive and decreasing. These bounds of $r$ and $J$ show that the system comprising (\ref{dr-bar}) and (\ref{dJ-bar}) has solutions which remain finite for all finite $t>t_0$. Along with the bounds of part (i) of Lemma \ref{lemma:bounds}, this is sufficient to ensure that the solution extends to all $t>t_0$. Since the solution is $C^1$, the right hand side of (\ref{dtau-bar}) is continuous, yielding existence and uniqueness of the solution for $\tau$ and hence for the full system on $[t_0,+\infty)$. 

For the past evolution, we use the lower bounds of (\ref{a-props-pos}) and (\ref{b-props-pos}) to write 
\be \dot{r} > rfJ-\frac{r^2}{3m}f^{1/2}(1-f^{1/2})\frac{H'}{J}.\label{rdot-above} \ee
Since $r>2m$ throughout the McVittie region, this bound also holds on $\Sigma$. Then 
\be J^2 = H^2 +\frac{2m}{r^3} < H^2+\frac{1}{4m^2},\label{eq:J-bound} \ee
which yields a lower bound for the positive term $1/J$. We also have the upper bound $r<r_0$ for $t<t_0$. Then defining 
\be v = 1-f^{1/2},\label{eq:v-def} \ee
(\ref{rdot-above}) yields 
\be \dot{v} <\left(\frac{m}{r_0}J_0 + \frac{H'}{3(H^2+k^2)^{1/2}}\right)v,\quad k=1/2m. \label{eq:vdot} \ee
Integrating over $(t,t_0)$ gives 
\be v(t) > C_0\exp\left(-\frac{m}{r_0}J_0(t_0-t)\right)\left(\frac{\sqrt{H^2+k^2}+H}{\sqrt{H^2+k^2}-H}\right)^{1/6},\quad t<t_0, \label{v-lower-bound} \ee
where $C_0$ is a constant. 
Since $H\to+\infty$ as $t\to 0^+$, this shows that $v=1$ at some time $t_1\in(0,t_0)$: this is equivalent to $r\to 2m$ as $t\to t_1^+$. This limit and the bounds of part (i) of Lemma \ref{lemma:bounds} show that the right hand sides of (\ref{dtau-bar}) and (\ref{dJ-bar}) are bounded on $(t_1,t_0]$ and so the solutions for $\tau$ and $J$ exist on this interval (and so on $(t_1,+\infty)$). 

We note also that 
\be H\to H_1 = H(t_1),\quad J\to J_1 = \left(H_1^2+\frac{1}{4m^2}\right)^{1/2}\quad \hbox{as} \quad t\to t_1^+. \label{h-j-past-limits} \ee
and \be \lim_{t\to t_1^+} \dot{r}=0. \label{rdot-limit} \ee
This last limit follows from (\ref{dr-bar}) and the bounds (\ref{a-props-pos}) and (\ref{b-props-pos}). 

To prove (\ref{r-at-infinity}), we note that both terms on the right hand side of (\ref{dr-bar}) are positive. We can drop the second term (with $H'$) and apply the lower bound on $\mathcal{A}$ from (\ref{a-props-pos}) to obtain
\be \dot{r} > (r-2m)H. \label{eq:rdot-inf} \ee
The result follows immediately by integrating and using (\ref{H-scale-inf}). 

To complete the proof of part (i), we verify by a direct calculation that the solutions of (\ref{dtau-bar})-(\ref{dJ-bar}) are also solutions of (\ref{eq:match1}), (\ref{eq:match2}) and (\ref{eq:match4a}). Uniqueness of solutions of the latter system follows immediately if $\alpha_0\neq0$, as this allows us to write (\ref{eq:match1}), (\ref{eq:match2}) and (\ref{eq:match4a}) in the form (\ref{dtau-bar})-(\ref{dJ-bar}). If $\alpha_0=0$, we obtain uniqueness as follows: a subtlety arises because (\ref{eq:match1}) is empty at $t=t_0$ in this case, and the system is underdetermined. We know that the relevant IVP has at least one solution, namely the solution $(\tau,r,J)_{\textrm{aux}}$ of the auxiliary IVP comprising (\ref{dtau-bar})-(\ref{dJ-bar}) and the initial values (\ref{eq:ICS}). These initial values are subject to 
\be \alpha_0 = 1-\frac{2m}{r_0}-r_0^2H_0^2 = 1-r_0^2J_0^2 = 0. \label{ICS-alpha-zero} \ee
Let $(\tau,r,J)$ be any other solution of the IVP. Then $\alpha=1-r^2J^2$ and $J>0$. Using (\ref{eq:match2}) and (\ref{eq:match4a}), we calculate 
\be \left.\frac{d(rJ)}{dt}\right|_{t=t_0}=\left.\frac{2f}{r(f+1)}\left(1-\frac{3m}{r}\right) +\frac{4r^2}{3(f+1)}HH'\right|_{t=t_0}. \label{generic-id} \ee
By hypothesis - see (\ref{H-not-special}) - this term is non-zero (but vanishes if $H$ has the form obtained in Section 6.3.1 for the solution with $\alpha\equiv 0$). It follows that $rJ$ is not equal to its initial value of 1 on a punctured neighbourhood $I'$ of $\{t_0\}$, and hence that $\alpha\neq0$ on $I'$. Hence the solution $(\tau,r,J)$ satisfies the auxiliary system (\ref{dtau-bar})-(\ref{dJ-bar}) on $I'$. By continuity, the solution must also satisfy the auxiliary system on $I'\cup\{t_0\}$, and also satisfies the initial conditions (\ref{eq:ICS}). But $(\tau,r,J)_{\textrm{aux}}$ is the unique solution on $I'\cup\{t_0\}$ of this IVP, and uniqueness follows. This completes the proof of part (i) of the proposition.   \hfill$\blacksquare$
\vskip20pt

\noindent\textbf{Proof of Theorem 6.1 (ii):}  We can use the equation (\ref{dtau-bar}) to describe the evolution of $\tau$ on the maximal interval of existence $I$. As $t\to t_1^+$, the proof of Proposition 6.2 shows that $r\to 2m, H\to H_1:=H(t_1)>0$ and $J\to J_1 = (H_1^2+1/4m^2)^{1/2}$. Applying the bounds of Lemma \ref{lemma:bounds}, we see that $\mathcal{A}\to 0$ as $t\to t_1^+$, and that $\mathcal{B}$ and $\mathcal{C}f^{1/2}$ are bounded in the limit, lying in a bounded interval that asymptotes to $(-1/(2mH_1),-1/(2mJ_1))$. Likewise, $H'\to H'(t_1)<0$. It follows that there exists $t_2>t_1$ such that $\dot{\tau}(t)<0$ for all $t\in(t_1,t_2)$. 

For large $t$, there are two cases to consider depending on whether or not $H_\infty=\lim_{t\to+\infty}H(t)\geq0$ is positive or zero. In both cases, $J_\infty=\lim_{t\to+\infty}J(t)=H_\infty$ by the mass continuity equation (\ref{eq:match3}). For $H_\infty>0$, we can apply a squeeze principle and use Lemma \ref{lemma:bounds} to obtain
\be \lim_{t\to+\infty} \mathcal{B} = \lim_{t\to+\infty} \mathcal{C}f^{1/2} = -\frac{m}{H_\infty}r^{-1}, \label{b-c-inf1} \ee
while $\mathcal{A}\to 1$ in this limit. By hypothesis, $H'(t)\to 0$ as $t\to+\infty$. Thus
\be \dot{\tau} \sim 1 +\frac{m}{3H_\infty^2}r^{-1}H' \to 1,\quad t\to+\infty,\label{dtau-pos-inf1} \ee
proving the second part of the result in this case. 

If $H_\infty=0$, we have $\lim_{t\to+\infty} \mathcal{A}=1$ as before, and Lemma \ref{lemma:bounds} yields the lower bound 
\be \frac{r^3}{3m}\mathcal{B}\mathcal{C}\gamma^{-1/2}H' \gtrsim \frac{m}{3r}\frac{H'}{H^2} \sim - \frac{m}{3r}\kappa,\quad t\to+\infty. \label{bound-H0-zero} \ee

That is, the first term is greater than a quantity that asymptotes to the second term, which in turn is asymptotic to the third in the limit $t\to +\infty$. The last relation relies on (\ref{dH-lim-zero}). As in the first case, this yields $\dot{\tau} \sim 1$ as $t\to +\infty$.

Thus $\dot{\tau}$ changes sign and hence must vanish at least once on $I$. 

\hfill $\blacksquare$
\vskip20pt

\noindent\textbf{Proof of Theorem 6.1 (iii):}  For part (iii), we prove the statements in reverse order. We assume the existence of a solution on a maximal interval $I\subset (0,+\infty)$, and show that contradictions arise. As in part (i), the mass continuity condition (\ref{eq:match3}) shows that $J$ is non-zero and hence (in this case) negative throughout $I$. 

If we encounter a point $p\in\Sigma$ at which $\alpha^+=\alpha^-<0$, in particular if $\alpha(t_0)<0$, we see that the matching entails that of an anti-trapped region of the expanding McVittie spacetime with a trapped region of the collapsing FLRW spacetime. As seen in Section 5.2, this is not possible (see also Comment 4 below).

Otherwise, we consider the auxiliary system obtained by replacing the coefficients of (\ref{eq:tau-dot1})-(\ref{eq:jdot-final}) with the $x<0$ forms of Lemma \ref{lemma:bounds}.

Assume that $\alpha>0$ throughout $I$ (we obtain a contradiction). With $x=rJ<0$, the auxiliary system is (\ref{dtau-bar})-(\ref{dJ-bar}) with the coefficients $(\mathcal{A},\mathcal{B},\mathcal{C})$ replaced by the $(a,b,c)$ of (\ref{abc-neg}).  As per Lemma \ref{lemma:bounds}, these coefficients are all positive when $\alpha>0$, and both $r$ and $J$ are decreasing on $I$ while $\alpha$ remains positive. It follows that $y=rH$ is decreasing, and so $y(t)>y(t_0)$ for all $t<t_0, t\in I$. The solution for $(r,J)$ persists for all $t<t_0$ (the relevant coefficients remaining finite and positive) unless we extend to $t\to0^+$. In either case, there must exist $t_*\in(0,t_0)$ at which $y(t_*)=1$ ($y$ cannot have an upper bound less than one as $r\to2m$ and $H$ diverges as $t$ decreases to zero). But then $x^2(t_*)>1$, which corresponds to $\alpha(t_*)<0$. \hfill{$\blacksquare$}

\begin{comments} 
To clarify the proof of the $\alpha<0$ case of part (iii) of Theorem 6.1, we revisit the discussion of Section 5.2 on matching regular, trapped and anti-trapped regions. It is useful for this to consider various vector fields encountered earlier. 

Define the null vector fields 
\begin{eqnarray} \vec{n}^{(1)} &=& (\beta+1)\partial_t-\alpha\partial_r,\label{null-n1} \\
\vec{n}^{(2)} &=& (\beta-1)\partial_t-\alpha\partial_r.\label{null-n2}
\end{eqnarray}
On $\Sigma$, these can be written in the \textit{continuous} form 
\begin{eqnarray}
    \left.\vec{n}^{(1)}\right|_{\Sigma} &=& \frac{\Gamma+\dot{r}}{\sqrt{2}\Delta}\left( (\sqrt{2}-1)\vec{e}_1+\vec{\ell}\right), \label{null-n1-cts} \\
    \left.\vec{n}^{(2)}\right|_{\Sigma} &=& \frac{-\Gamma+\dot{r}}{\sqrt{2}\Delta}\left( (\sqrt{2}+1)\vec{e}_1-\vec{\ell}\right). \label{null-n2-cts}
\end{eqnarray}
The continuity of these forms follows from continuity of the basis vectors, the preliminary matching conditions $\Delta^+=\Delta^-$, $r^+=r^-$ (and the tangential derivative of the latter) and the junction condition $\Gamma^+=\Gamma^-$. 
The time orientation of these null directions is determined by the inner products
\begin{eqnarray}
    g(\vec{u},\vec{n}^{(1)}) = -\gamma^{-1/2}(\beta+1),\label{temp-null1} \\
    g(\vec{u},\vec{n}^{(2)}) = -\gamma^{-1/2}(\beta-1).\label{temp-null2}
\end{eqnarray}
The null directions are future-pointing if and only if these inner products are negative (and so point into the same half of the future causal cone as $\vec{u}$). 

Now consider the situation that holds in part (iii) of Theorem 6.1. We have $H>0$ and so $\beta^+=rH\gamma^{1/2}>0$ and $J<0$ so that $\beta^-=rJ<0$. When $\alpha<0$, this implies that $\beta^+>1$ and $\beta^-<-1$. Then
\be g(\vec{u},\vec{n}^{(1)})^+<0,\quad g(\vec{u},\vec{n}^{(2)})^+<0,\label{fp-plus} \ee
whereas 
\be g(\vec{u},\vec{n}^{(1)})^->0,\quad g(\vec{u},\vec{n}^{(2)})^->0.\label{fp-minus} \ee
Thus there is a discontinuity in the chronological orientation of the continuous null vector fields $\vec{n}^{(i)}, i=1,2$ across $\Sigma$ in this case, and time orientability of the spacetime is lost. There is a clear solution: to switch the time orientation of (say) the FLRW region. This amounts to swapping the sign of $J$ - and so returning us to parts (i) and (ii) of Theorem 6.1. 
\end{comments} 

The causal nature of the matching hypersurface depends on the properties of the Hubble function $H$ of the McVittie spacetime, and we can deduce only the following partial results.

\begin{proposition}\label{prop:causal} Let $J_0>0$ and let $\Sigma$ be the matching hypersurface obtained in part (i) of Theorem 6.1. Then:
\begin{itemize}
   \item[(i)] The matching hypersurface is initially spacelike.
   \item[(ii)] Consider the family of McVittie spacetimes with $\Lambda=3H_\infty^2>0$ and with 
   \be H(t) = H_\infty\coth\left(\frac32(1+\kappa)H_\infty t\right),\quad \kappa \in (-1,1],\quad t>0. \ee 
   This corresponds to the background equation of state $P_0=\kappa\mu_0$. Then $\Sigma$ is timelike at late times for $\kappa>-2/3$ and is spacelike at late times for $\kappa<-2/3$. 
   \item[(iii)] Consider the family of McVittie spacetimes with $\Lambda=0$ and with 
   \be H(t) = \xi t^{-1},\quad \xi>\frac13,\quad  t>0. \ee 
   This corresponds to the background equation of state $P_0=\kappa\mu_0$ with $\kappa = -1+\frac{2}{3\xi}$. Then $\Sigma$ is timelike at late times for $\xi<2/3$ (or equivalently $\kappa>0$) and is spacelike at late times for $\xi>2/3$ (or equivalently $\kappa\in(-1,0)$).  
\end{itemize}
\end{proposition}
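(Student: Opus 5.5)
The plan is to avoid the divergent McVittie metric function $\gamma$ near $r=2m$ by computing $\Delta$ from the FLRW side of the matching, using $\Delta^+=\Delta^-$ and the gauge $\dot t=1$. With $\alpha=1-r^2J^2$ and $\beta^-=rJ$ on $\Sigma$, completing the square gives
\be \Delta = (1-r^2J^2)\dot\tau^2+2rJ\dot\tau\dot r-\dot r^2 = \dot\tau^2-(\dot r-rJ\dot\tau)^2 . \ee
So the causal character at each point is decided by comparing $|\dot\tau|$ with $|\dot r-rJ\dot\tau|$. The second quantity is the radial velocity of $\Sigma$ relative to the FLRW comoving flow. All inputs come from the auxiliary system (\ref{dtau-bar})--(\ref{dJ-bar}) and the bounds of Lemma \ref{lemma:bounds}(i), which apply since $J>0$ by Theorem \ref{Thm:Match-V4}(i).

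\textbf{Part (i).} I would take the limit $t\to t_1^+$, using what was shown in the proof of Theorem \ref{Thm:Match-V4}(i),(ii): $r\to2m$, $\dot r\to0$, $H\to H_1$, $J\to J_1=(H_1^2+1/4m^2)^{1/2}$, $\mathcal A\to0$, and $\mathcal B,\ \mathcal C f^{1/2}$ tending to bounded negative limits. Since $\gamma^{-1/2}=f^{1/2}$, equation (\ref{dtau-bar}) should give the limit
\be \dot\tau\to\frac{8m^3}{3m}\cdot\frac{H'(t_1)}{4m^2H_1J_1}=\frac{2H'(t_1)}{3H_1J_1}<0 . \ee
I would first confirm the exact limits of $\mathcal B$ and $\mathcal C f^{1/2}$ at $(r,x,y)=(2m,2mJ_1,2mH_1)$ directly from (\ref{abc-bar}). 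With $\dot r\to0$, one gets $\Delta\to\alpha_1\dot\tau_1^2$, where $\alpha_1=1-4m^2J_1^2=-4m^2H_1^2<0$. Hence $\Delta<0$ on some interval $(t_1,t_1+\delta)$, so $\Sigma$ is initially spacelike.

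\textbf{Parts (ii) and (iii).} Here $r\to\infty$, $J\to H_\infty$, $\mathcal A\to1$ and $\dot\tau\to1$, by the proof of Theorem \ref{Thm:Match-V4}(ii). The sign of $\Delta$ at late times is therefore the sign of $1-(\dot r-rJ\dot\tau)^2$. Subtracting $rJ$ times (\ref{dtau-bar}) from (\ref{dr-bar}) cancels the $\mathcal A$ terms and gives
\be \dot r-rJ\dot\tau=\frac{r^3}{3m}\,\mathcal B\,(1-x\,\mathcal C)\,f^{1/2}H' . \ee
The next step is to expand $\mathcal B$ and $\mathcal C$ for large $r$ using $x^2=y^2+2m/r$. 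In particular $x-y\approx m/(ry)$, so $\mathcal B$ should be $O(m/(r y^3))$ when $y=rH\to\infty$. It is also necessary to keep the cases $y\to\infty$ and $y$ bounded separate.

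Once that expansion is done, I would substitute the explicit Hubble functions. In (ii), $H'\sim -6H_\infty^2(1+\kappa)e^{-3(1+\kappa)H_\infty t}$. In (iii), $H'=-\xi t^{-2}$. Alongside this I need the late-time growth of $r$. I would obtain it by integrating $\dot r\approx rJ\mathcal A$ with $J^2=H^2+2m/r^3$, giving $r\sim C a(t)$ with $a=\exp\int H$ up to corrections. That makes $r\propto e^{H_\infty t}$ in (ii) and $r\propto t^{\xi}$ in (iii). The claimed thresholds, $\kappa=-2/3$ in (ii) and $\xi=2/3$ (equivalently $\kappa=0$) in (iii), should then come from comparing exponents in $(\dot r-rJ\dot\tau)^2$ against $1$.

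\textbf{Main obstacle.} I expect the hardest step to be this late-time asymptotic expansion. A naive leading-order estimate of $\mathcal B$ suggests the relative velocity tends to zero, which would make $\Sigma$ timelike for every $\kappa$. The threshold must therefore come from subleading terms. These include the $O(m/r^3)$ corrections to $J-H$ and to $\mathcal A$, and the deviation of $\dot\tau$ from $1$ driven by the $\mathcal B\mathcal C$ term. I would carry both $\dot\tau-1$ and $\dot r-rJ\dot\tau$ to the first order at which they are non-vanishing, and then compare their decay or growth rates in $t$.
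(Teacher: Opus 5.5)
Your part (i) is correct and takes a different route from the paper. The paper writes $\Delta=\Delta^+=\gamma^{-1}\bigl(1-(\gamma\dot r-\beta)^2\bigr)$, shows $\gamma\dot r-\beta>0$, reduces to $\textrm{sign}(\Delta)=\textrm{sign}(H'-Q)$, and uses $Q\to0$ while $H'(t_1)<0$. Your limits all check out: $\mathcal B\to-1/(2mJ_1)$, $\mathcal C f^{1/2}\to-1/(2mH_1)$, $\dot\tau\to2H'(t_1)/(3H_1J_1)$, and hence $\Delta\to-4m^2H_1^2\dot\tau_1^2<0$.

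Parts (ii) and (iii), however, contain a genuine error, and it sits exactly at your ``main obstacle''. Your estimate $\mathcal B=O(m/(ry^3))$ keeps only $x-y\approx m/(ry)$ in the numerator $yf^{1/2}-x$. It drops $y(f^{1/2}-1)\approx-my/r$, which dominates when $y\gg1$. In fact $yf^{1/2}-x\approx-\frac{m}{r}\bigl(y+y^{-1}\bigr)$, so $\mathcal B\approx-m/(ry)=-m/(r^2H)$, which is the paper's $b\sim-m/(r^2H)$. Likewise $\mathcal C f^{1/2}\approx-m/(r^2H)$. Both hold whether $y\to0$ or $y\to\infty$, provided only that $r^3H^2\to\infty$. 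On the constraint surface $1-x\mathcal C=1/\mathcal A$ and $\mathcal B/\mathcal A=\mathcal C$, so your relative velocity is exactly $\dot r-rJ\dot\tau=\frac{r^3}{3m}\mathcal C f^{1/2}H'\approx-\frac{rH'}{3H}$. The threshold is therefore already visible at leading order. Your fallback of looking for it in subleading corrections could not have worked anyway: with $\dot\tau\to1$ and a relative velocity tending to zero, $\Delta\to1>0$ whatever the corrections.

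The same omission corrupts your growth rate for $r$. The $H'$-term in the $\dot r$ equation is $\approx-rH'/(3H)$ and is not negligible. So $\dot r/r\approx J-H'/(3H)$ and $r\approx C\,a(t)H(t)^{-1/3}$, not $C\,a(t)$.

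In (ii), $H^{-1/3}$ tends to a constant, so $r\sim ce^{H_\infty t}$ survives. The relative velocity is then $\sim2cH_\infty(1+\kappa)e^{-(2+3\kappa)H_\infty t}$, which tends to $0$ or $\infty$ according as $\kappa>-2/3$ or $\kappa<-2/3$.

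In (iii), $H^{-1/3}\propto t^{1/3}$ gives $r\sim ct^{\xi+1/3}$, as in the paper. The relative velocity is then $\sim\frac{c}{3}t^{\xi-2/3}$, giving the threshold $\xi=2/3$. With your $r\propto t^{\xi}$ you would instead get $t^{\xi-1}$ and a spurious threshold at $\xi=1$.

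With these two corrections your $\Delta^-=\dot\tau^2-(\dot r-rJ\dot\tau)^2$ route goes through and is arguably more transparent than the paper's $Q$. As written, though, the proposal would not deliver the stated thresholds.
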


\noindent\textbf{Proof:}
With $\dot{t}=1$, we can write 
\be \Delta=\Delta^+ = -\gamma^{-1}(\gamma\dot{r}-\beta+1)(\gamma\dot{r}-\beta-1).\label{eq:delta-form} \ee
Using (\ref{eq:rdot-f}), we find 
\be \gamma\dot{r}-\beta = -b\left(1-\frac{r^3}{3m}\gamma^{1/2}H'\right) >0.\label{eq:grdot-minus-beta} \ee
Thus 
\be \sign{\Delta} = \sign{H'-Q} \label{eq:key-ineq} \ee
where
\be Q= \frac{3m}{r^3}\frac{(1-rJ)(1+\beta)}{\beta-rJ}f^{1/2},\label{eq:Q-def}
\ee

To prove part (i), we note that in the limit $t\to t_1^+$, we have $r\to2m$, $H\to H_1>0$, $J\to J_1>1/2m$ and $f\to0$. Thus $Q$ vanishes in this limit, but $H'$ has limiting value $H'(t_1)<0$. Then we must have $\Delta<0$ in this limit. This shows that there must exist some $t_3>t_1$ such that $\Sigma$ is spacelike for all $t\in(t_1,t_3)$. 

For the positive cosmological constant case, we can use the inequalities (\ref{a-props-pos}) and (\ref{b-props-pos}) of Lemma \ref{lemma:bounds} and a squeeze principle to obtain the asymptotic behaviours 
\be a \sim 1,\quad b \sim -\frac{m}{r^2H},\quad t\to+\infty. \label{a-b-limits} \ee
Then
\be \dot{r} \sim r (H_\infty - \frac13\frac{H'}{H_\infty}),\quad t\to+\infty, \label{rdot-lim-lam-pos} \ee
and integrating yields 
\be r \sim c\exp(H_\infty t),\quad t\to+\infty \label{r-lim-lampos}\ee
for some positive constant $c$. Note that the integral includes a term $\exp(H(t))$, which is asymptotically constant. With some manipulation (along the lines of those mentioned in the proof of Lemma \ref{lemma:bounds}), we can write $Q$ as 
\begin{eqnarray} Q &=&  - \frac{(H+r^{-1}f^{1/2})(H+Jf^{1/2})}{J+r^{-1}}\frac{3f^{1/2}}{2r} \nonumber \\
&\sim& -\frac32\frac{H_\infty}{r},\quad t\to +\infty. 
\end{eqnarray} 
Part (ii) follows by using (\ref{r-lim-lampos}) to compare this relation with 
\begin{eqnarray} H' &=& -\frac32H_\infty^2(1+k)\text{csch}^2 \left(\frac32(1+\kappa)H_\infty t\right) \nonumber \\
&\sim& -6H_\infty^2(1+\kappa)\exp\left(-3(1+\kappa)H_\infty t\right),\quad t\to+\infty, \label{eq:hp-limit} 
\end{eqnarray} 
and the result follows using (\ref{eq:key-ineq}). We note that the value of the positive constant $c$ is irrelevant for this comparison when $\kappa\neq-2/3$. 

To prove part (iii), we recall that $\dot{r}>0$ and that $r\to+\infty$ as $t\to+\infty$. Integrating the rough lower bound (\ref{eq:rdot-inf}) yields 
\be r \gtrsim ct^\xi,\quad t\to +\infty, \label{r-asymp-lamzero} \ee
and since $\xi>1/3$, it follows that $r^3H^2\to+\infty$ as $t\to+\infty$. Then 
\be J = (H^2+\frac{2m}{r^3})^{1/2} = H (1+o(1)),\quad t\to+\infty. \label{eq:J-H-remainder} \ee
Then the asymptotic relations (\ref{a-b-limits}) hold, and we find 
\be \frac{\dot{r}}{r} \sim \left(\xi+\frac13\right) t^{-1},\quad t\to+\infty, \label{rdot-lim-lamzero} \ee
and so 
\be r \sim c t^{(3\xi+1)/3},\quad t\to+\infty. \label{r-lim-lamzero} \ee
We can then determine the asymptotic behaviour of $Q$. We have the identity
\be Q = -\frac{3}{2}\frac{f^{3/2}}{r^2}\frac{(1+\beta)(\beta+rJ)}{1+rJ}.\label{Q-lamzero} \ee
From the calculations above, $\beta=rHf^{-1/2}\to0$, $rJ\to 0$ and $\beta+rJ\sim2rH$. Using (\ref{r-lim-lamzero}) and $H=\xi t^{-1}$, we obtain 
\be Q \sim -3\frac{\xi}{c}t^{-(\xi+4/3)}. \label{Q-lim-lamzero} \ee
Then we find that (\ref{eq:key-ineq}) is satisfied - and hence the hypersurface is timelike at large $t$ - whenever $\xi<2/3$ and is violated (spacelike at large $t$) whenever $\xi>2/3$. 
\hfill$\blacksquare$

\begin{comments}
    Notice that only the large$-t$ asymptotic behaviour of the Hubble function plays a role in the proofs of parts (ii) and (iii) here. It follows that we could weaken the  hypotheses to refer to McVittie spacetimes whose Hubble functions have the respective asymptotic behaviours. 
\end{comments}

\section{Conclusions}\label{sec:conclusions}

In the first instance, one has no right to expect that specifying the spacetime metric on one side of a matching hypersurface will allow one to determine both that hypersurface and the (entire) spacetime metric on the other side. The reason that this is in fact possible is that we have greatly restricted the unknown metric to be that of an FLRW spacetime. This is the context of our main result, Theorem \ref{Thm:Match-V4}. We can construct the unknown spacetime (and the matching hypersurface) because they depend only on functions of a parameter along the matching hypersurface. This is a highly specialized situation, that nevertheless answers one of our main questions: yes, a given McVittie spacetime can be smoothly matched to an FLRW spacetime. 

But does this provide a cosmological Oppenheimer-Snyder model? The answer here is no. What we mean by a cosmological OS model involves a spherical timelike matching hypersurface, so that the interior region is spatially bounded, providing the source of the exterior McVittie spacetime. Proposition \ref{prop:causal} shows that the matching hypersurface must \textit{always} admit a spacelike portion. We emphasize that this is an evolutionary feature of the matching hypersurface: if we were to assume that it is timelike at some suitable time $t>t_1$, then its past would include a spacelike portion (and its future may also do so: see parts (ii) and (iii) of the proposition). Had we assumed \textit{ab initio} that the matching hypersurface was timelike everywhere, and employed the classical Darmois junction conditions (continuity of the first and second fundamental forms of $\Sigma^\pm$), we could not have obtained a global existence result: a singularity of some form, presenting an obstacle to further evolution, would necessarily have appeared (presumably at a point at which $\Sigma$ becomes null in the general treatment). 

Other necessary consequences of the situation being considered bear comment. First, we mention the fact that the embedded hypersurface must be degenerate in the sense of the failure of the maximal rank condition (\ref{max-rank}) at at least one point. However, our results show that this failure occurs at isolated points, and so does not appear to present a fundamental obstruction to constructing the matching configuration. (This perspective is motivated by the example of Figure 1, and is incorporated in Definition \ref{def:match}.) More concerning is the fact that the cosmic time coordinate $t$ of the McVittie region and that of the FLRW region ($\tau$) are not cosynchronous throughout $\Sigma$. As follows in part (iii) of Theorem \ref{Thm:Match-V4}, $d\tau/dt$ changes sign along $\Sigma$, being initially negative, and positive at late times. It follows that $\left.\tau\right|_{\Sigma}$ must have a minimum, and so matching \textit{does not} provide the entire FLRW metric. 

In conclusion, we have shown that there does not exist an isotropic source for the McVittie spacetime. It seems that the thin shell models of \cite{haines1993thin} and \cite{tang2025matching} with FLRW interior remain the best means of providing such a source. Some key questions remain unanswered in relation to these models, principally in relation to the global structure of the matching hypersurface. Does this meet the past singularity $\{r=2m\}$ at a finite time in the past, or does the shell successfully excise this from the spacetime? It would also be of interest to better understand the matter content of these shells: does the energy density remain positive and finite?

\section*{Acknowledgements}
I thank Nico Santos for discussions at an early stage of the development of this paper: the initial question of constructing an isotropic source for McVittie came from him (via Malcolm MacCallum, who I also thank). Thanks also to Ra\"ul Vera for numerous discussions on matching, and to Eli\v{s}ka Kilme\v{s}ov\'{a} and Martin \v{Z}ofka for discussions on shell sources of McVittie spacetimes.

\section*{References}
\bibliographystyle{unsrt}
\bibliography{mybib}

\end{document}